\documentclass[a4paper,twocolumn,11pt,unpublished]{quantumarticle}
\pdfoutput=1
\usepackage[utf8]{inputenc}
\usepackage[english]{babel}
\usepackage[T1]{fontenc}
\usepackage{amsmath}
\usepackage{hyperref}
\usepackage{amsfonts, amssymb}
\usepackage{tikz}
\usepackage{lipsum}
\usepackage{mathrsfs}

\usepackage{graphicx}
\usepackage{dcolumn}
\usepackage{bm}
\hypersetup{
    colorlinks=true,
    citecolor=blue,
    linkcolor=black,
    urlcolor=blue
}
\usepackage{cite}

\usepackage{amsthm}
\newtheorem{thm}{Theorem}
\newtheorem{lem}{Lemma}
\newtheorem{prop}{Proposition}
\newtheorem{cor}{Corollary}

\newtheorem{rem}{Remark}

\begin{document}

\title{Facets of Non-locality and Advantage in Entanglement-Assisted Classical Communication Tasks}

\author{Sumit Rout}
\affiliation{International Centre for Theory of Quantum Technologies (ICTQT), University of Gda{\'n}sk, Jana Ba{\.z}ynskiego 8, 80-309 Gda{\'n}sk, Poland}
\author{Anubhav Chaturvedi}
\affiliation{Faculty of Applied Physics and Mathematics, Gda{\'n}sk University of Technology, Gabriela Narutowicza 11/12, 80-233 Gda{\'n}sk, Poland.}
\author{Some Sankar Bhattacharya}
\affiliation{Física Teòrica: Informació i Fenòmens Quàntics, Universitat Autònoma de Barcelona, 08193 Bellaterra, Spain}
\author{Pawe{\l} Horodecki}
\affiliation{International Centre for Theory of Quantum Technologies (ICTQT), University of Gda{\'n}sk, Jana Ba{\.z}ynskiego 8, 80-309 Gda{\'n}sk, Poland}

\maketitle
\begin{abstract}
We reveal key connections between non-locality and advantage in correlation-assisted classical communication. First, using the \emph{wire-cutting} technique, we provide a Bell inequality tailored to \emph{any} correlation-assisted bounded classical communication task. The violation of this inequality by a quantum correlation is equivalent to its quantum-assisted advantage in the corresponding communication task. Next, we introduce \emph{wire-reading}, which leverages the readability of classical messages to demonstrate advantageous assistance of non-local correlations in setups where no such advantage can be otherwise observed. Building on this, we introduce families of classical communication tasks in a Bob-without-input prepare-and-measure scenario, where non-local correlation enhances bounded classical communication while shared randomness assistance yields strictly suboptimal payoff. 
For the first family of tasks, assistance from any non-local facet leads to optimal payoff, while each task in the second family is tailored to a non-local facet. We reveal quantum advantage in these tasks, including qutrit over qubit entanglement advantage.
\end{abstract}

\section{Introduction}
Shared entanglement and local measurements give rise to \emph{non-local correlations} \cite{Bell1964, Brunner2014review}, which defy classical local-causal explanations. While entanglement does not increase the capacity of classical communication channels, it provides advantages in classical communication tasks \cite{RAC1,RAC2,PhysRevA.60.2737,Deba1,Brukner2002,BRUKNER2003,Brukner2004,Buhrman001,Cleve1997,Cubitt2010,Cubitt2011,Leung2012,Ho2022,PhysRevA.81.042326,Tavakoli2017,Tavakoli2020doesviolationofbell,Tavakoli2021,Yadavalli2022contextualityin,Pauwels22,Pauwels2022,Perry2015,agarwal2025nonlocalityassistedenhancementerrorfreecommunication,Alimuddin2023,Frenkel22}. Specifically, shared entanglement can reduce the classical communication required for some distributed computation \cite{PhysRevA.60.2737,Buhrman001,Cleve1997,Cubitt2010,Cubitt2011,Leung2012,Ho2022,Brukner2002,BRUKNER2003,Brukner2004, Perry2015} (measured in bits) or enhance its probability of success when the communication overhead is bounded \cite{RAC1,RAC2,Frenkel22,Pauwels2022,Alimuddin2023}. This article explores the relationship between quantum non-locality and its implications for communication complexity-like tasks.\\

Non-local correlations are necessary for an advantage in entanglement-assisted one-way noisy classical communication tasks. In this article, using a general proof technique called {\it wire-cutting}, we construct a Bell inequality tailored to any such classical communication task. We demonstrate that the advantage of entanglement assistance in a classical communication task directly corresponds to a proportionate violation of the associated Bell inequality and vice versa. Consequently, the violation of the associated Bell inequality offers a tool to probe for the advantage of quantum non-locality assistance in the classical communication task. Next, we consider the inverse problem regarding the utility of quantum non-locality in classical communication tasks. For a classical channel, we note that the message can be revealed without disturbing the communication. Thus, it can be considered as an observable in the classical communication tasks, a feature we term \emph{wire-reading}.\\

Wire-reading is a powerful tool for witnessing the utility of entanglement assistance to classical communication. Specifically and, perhaps surprisingly, we show that incorporating the classical message into the analysis uncovers quantum advantage in classical communication scenarios where no such advantage could be observed otherwise. In other words, even when the set of all correlations in the initial task could be simulated using shared randomness assistance to the classical channel, reading the classical wire introduces correlations beyond those classically simulable. Equipped with wire-reading, we consider one-way communication tasks in the simplest scenario where the receiver has no inputs. In such minimal one-way communication scenarios, assistance from shared randomness leads to no advantage of quantum communication over a classical channel \cite{Frenkel15}. Consequently, any quantum advantage in this scenario must stem from shared correlations without classical explanation. Specifically, any advantageous entanglement-assisted classical communication protocol must be adaptive in such tasks \cite{Pauwels2022}, {\it i.e.} Bob's measurement choice must depend on the received message.\\

We introduce different families of classical one-way communication tasks in the Bob-without-input scenario, where shared randomness assistance is always sub-optimal. We provide a tight upper bound on the payoff when arbitrary shared randomness assistance to classical communication is allowed for each task. For the first family of tasks, we show that correlations from the non-local facets of the no-signalling polytope perfectly assist the classical channel, {\it i.e.} they maximise the payoff. As parameters associated with the task increase, the advantage drops sharply, making it progressively challenging to demonstrate the utility of shared entanglement. Specifically, some non-local correlations on the isotropic line from a non-local facet might not be useful for the task. However, we can observe advantages of some of these correlations through a different family of tasks. We introduce a class of tasks corresponding to each non-local facet in the no-signalling polytope. We show that assistance from correlations belonging to this specific non-local facet is optimal for these tasks.  As an explicit example, we show the resourcefulness of all non-local correlations along the isotropic line in the no-signalling polytope - connecting a specific $n$-input and $2$-output bipartite non-local extremal/ facet correlation to white noise as long as the noise fraction is strictly less than $0.5$.\\ 

We explicitly provide several instances of quantum advantage for some of the tasks. For a particular task $\mathbb{CS}[d=2,k=3]$, we observe that the maximum payoff achievable using quantum correlations is obtained from a two-qutrit entangled state, while assistance from two-qubit entangled states leads to a lower payoff.\\ 

In the next section, we discuss some mathematical preliminaries, the Bell inequality associated with an entanglement-assisted classical communication task, and wire-reading.

\section{Bell Inequalities from Classical Communication Tasks}
\label{sec:CS to Bell}

We will briefly discuss no-signalling correlations first. For this purpose, consider a non-local scenario involving Alice and Bob, who are space-like separated. They can access some shared correlations $P=\{P(a,b|x,y)\}_{a\in [A],b\in [B],x\in [X],y\in [Y]}$ where $x\in [X]$, $y\in [Y]$ denote the inputs of Alice and Bob, respectively, and $a\in [A]$ and $b\in [B]$ denote their respective outputs. The polytope $\mathcal{NS}$ denotes the convex hull of all no-signalling correlations they share. For any $P=\{P(a,b|x,y)\}_{a,b,x,y}\in\mathcal{NS}$, $P(a|x,y)=P(a|x,y')$ and $P(b|x,y)=P(b|x',y) ~\forall~a\in [A], b\in [B],~ x,x'\in [X], ~y,y'\in [Y]$. We will sometimes shortly regard a no-signalling correlation $P\in \mathcal{NS}$ equivalently as a function $P: [X]\times [Y]\to [A]\times [B]$ with two inputs and two outputs. Local polytope $\mathcal{L}\subset \mathcal{NS}$ denotes the set of local correlations, i.e., the ones obtained from shared randomness. For $P=\{P(a,b|x,y)\}_{a,b,x,y}\in\mathcal{L}$, $P(a,b|x,y)=\sum_{\lambda\in \Lambda} \mu(\lambda)p(a|x,\lambda)p(b|y,\lambda)$ where $\{\mu(\lambda)\}_{\lambda\in \Lambda}$ is a probability distribution over ontic state space $\Lambda$. The extremal points of the Local polytope $\mathcal{L}$ are denoted as $\{P_{L}^{(i)}\}$, and the remaining extremal points of the $\mathcal{NS}$ polytope $\{P_{NL}^{*(i)}\}$ represent the non-local extremal points. A non-local facet of the $\mathcal{NS}$ polytope is a convex hull of some non-local extremal correlations, such that all correlations $P$ on it are either one of these non-local extremal correlations or can only be expressed as a convex mixture of these extremal points. In other words, if a correlation $P$ lies on some son-local facet then $P\neq pP'+(1-p)P_{L}^{(i)}$ where $0\leq p<1$ and $P'\in \mathcal{NS}$. We denote by $P_{WN}$ the correlation in this scenario such that $P_{WN}(a,b|xy)=\frac{1}{AB}~\forall x,y,a,b$. A no-signalling correlation $\Tilde{P}$ lies on the isotropic line connecting the correlation $P$ and $P_{WN}$ if it can be expressed as some convex mixture $pP+(1-p)P_{WN}$ of these two correlations where $0\leq p\leq1$. A non-trivial linear Bell functional can be expressed as  
\begin{equation}\label{eqn:Bellineq}
\mathbf{B}(P)=\sum_{a,b,x,y} c_{x,y}^{a,b}~~P(ab|xy)
\end{equation}
where $c_{x,y}^{a,b}\in \mathbb{R}~ \forall x\in [X],y\in [Y],a\in [A],b\in [B]$ and $P\in\mathcal{NS}$. The Bell functional specifies a hyperplane in the no-signalling polytope such that the local correlations reside on one-half of the hyperplane defined by $\mathbf{B}(P) \leq \beta_L$. Here $\beta_L=\max_{P\in\mathcal{L}} \mathbf{B}(P)$ is the local bound of the Bell inequality. Any violation of Bell inequality, i.e. $\mathbf{B}(P)>\beta_L$, implies that the no-signalling correlation $P$ is non-local, {\it i.e.}, $P\in \mathcal{NS}\backslash\mathcal{L}$. A no-signalling correlation $P$ has quantum realisation if there is a quantum bipartite shared state $\rho_{AB}\in D(\mathbb{H}_A\otimes \mathbb{H}_B)$ and POVMs $\{E_a^x\in \mathcal{B}_+(\mathbb{H}_A):\sum_{a\in[A]} E_{a}^x=\mathbb{I} \} ~\forall x\in [X]$ for Alice and POVMs $\{E_b^y\in \mathcal{B}_+(\mathbb{H}_B):\sum_{b\in[B]} E_{b}^y=\mathbb{I} \} ~\forall y\in [Y]$ for Bob such that $P(a,b|x,y)=Tr(\rho_{AB}~E_a^x\otimes E_b^y)$. $\mathbb{H}_A$ and $\mathbb{H}_B$ are the Hilbert space associated with Alice's and Bob's subsystems, respectively. Although the no-signalling correlations cannot be used for communication, they can be used to assist classical communication in the PM scenario.\\

We briefly discuss one-way classical channels, which are accessible to the parties for the communication task we will introduce later. We will denote a discrete and memoryless classical channel with inputs in the finite set $T$ and output in the finite set $T$ as $\mathscr{T}: T\to T$. For such a channel, $\mathscr{T}(\tau'|\tau)$ denotes the conditional probability of receiving alphabet $\tau'\in T$ when the sent alphabet is $\tau\in T$. In general tasks, the channel could be noisy; however, for some of the tasks, we would consider that Alice and Bob have access to a noiseless channel.  The classical channel is noiseless if $\mathscr{T}(\tau'|\tau)=\delta_{\tau',\tau}$.\\

Now we describe the setup and characterise the communication tasks, which we will consider in this work. We consider a bipartite prepare and measure (PM) scenario involving Alice and Bob. Alice receives input $m\in M$, and Bob can produce output $n\in N$ where $M$ and $N$ are finite sets. Alice can access a one-way classical channel of bounded capacity $\mathscr{T}: T \to T$ (noisy or noiseless). Additionally, they have access to arbitrary shared randomness, which can assist classical communication. However, sharing entanglement or other non-classical correlations is regarded as costly. Using the shared correlation assistance to classical channel $\mathscr{T}$ as well as local pre- and post-processing, Alice and Bob can produce conditional probabilities $\{\mathcal{N}(n|m)\}_{n,m}$. In this PM setup, different tasks can be defined based on the observed conditional probabilities $\{\mathcal{N}(n|m)\}_{n,m}$ of random variables involving the input $m\in M$ of Alice and output $n\in N$ of Bob. The success metric $S$ for a communication task that we will consider in this work is given by some linear function of conditional probabilities $\{\mathcal{N}(n|m)\}_{n,m}$. It can be specified as 
\begin{equation}\label{eq: generic payoff}
S(\mathcal{N})=\sum_{m,n} w^m_{n} ~ \mathcal{N}(n|m)
\end{equation}

Here, $w^m_{n} \in \mathbb{R}~\forall~n\in N, m\in M$. We would denote a communication task in this PM scenario as $\mathbb{C}_{M,N}[\mathscr{T},\{w^m_n\}]$. For the task, the parties can have access to shared correlation (shared randomness, shared entanglement or post-quantum correlations) that is no-signalling.\\ 

For the task $\mathbb{C}_{M,N}[\mathscr{T},\{w^m_n\}]$, consider the scenario when Alice and Bob have access to a no-signalling correlation $P\in \mathcal{NS}$. The distribution $\{\mathcal{N}(n|m)\}_{n,m}$ is realisable using no-signalling correlation assistance along with the classical channel $\mathscr{T}: T\to T$  if there is some $P\in \mathcal{NS}$ and suitable strategies such that such that
\begin{equation} \label{eq: NSRealizableChannel}
 \begin{split}
     \mathcal{N}(n|m) = \sum_{\substack{a, b, x, y,\\ \tau, \tau'}}& 
     \big[ p(x|m) \, p_{e}(\tau|a,m) \, P(a,b|x,y)\times \\
      &~\times\mathscr{T}(\tau'|\tau) \, p(y|\tau') \, p_{d}(n|\tau',b) \big]
 \end{split}
\end{equation}
Here, $\{p(x|m)\}_{x\in[X],m\in M}$ and $\{p(y|\tau')\}_{\tau'\in T, y\in [Y]}$ denotes coding respectively for Alice and Bob for choosing their input of no-signalling resource based on their input/received message. $\{p_{e}(\tau|a,m)\}_{\tau\in T, a\in [A], m\in M}$ is Alice's encoding for the communication based on her input and response from resource correlation. $\{p_{d}(n|\tau',b)\}_{\tau'\in T, n\in N, b\in [B]}$ is Bob's decoding based on his received message and response from resource correlation. Note that, the equation \eqref{eq: NSRealizableChannel} does not makes sense in general if the correlation $P=\{ P(a,b|x,y)\}_{a,b,x,y}\in \mathcal{NS}$ is signalling from Bob to Alice as $\{\mathcal{N}(n|m)\}_{n,m}$ may not be a valid conditional probability distribution (see appendix \ref{appendix: Invalid for signalling}). 

\begin{rem}\label{rem: pre-post processing ns}
   Consider a non-local scenario where Alice's and Bob's input is $m\in M$ and $\tau'\in T$. Their output are respectively $\tau\in T$ and $n\in N$. Suppose Alice and Bob share a no-signalling correlation $P=\{ P(a,b|x,y)\}_{a,b,x,y}\in \mathcal{NS}$. They can apply local pre-processing: Alice sample $x\in[X]$ according to distribution $\{p(x|m)\}_{x\in[X],m\in M}$ based on her input variables $m\in M$, and  Bob samples $y\in [Y]$ according to distribution $\{p(y|\tau')\}_{\tau'\in T, y\in [Y]}$ based on his input variables $\tau'\in T$. Additionally, Alice and Bob can post-process their respective outcomes $a\in[A]$ and $b\in [B]$ based on their local variables $m\in M$ and $\tau'\in T$ to and finally output $\tau\in [T]$ and $n\in N$ respectively. They do this according to distribution $\{p_{e}(\tau|a,m)\}_{\tau\in T, a\in [A], m\in M}$ and $\{p_{d}(n|\tau',b)\}_{\tau'\in T, n\in N, b\in [B]}$, respectively. Thus, the no-signalling correlation they obtain after pre- and post-processing \cite{Wolfe2020quantifyingbell, Schmid2020typeindependent} is $\{P(\tau,n|m,\tau')\}_{\tau,n,m,\tau'}$, which can be expressed as
\begin{align}\label{eq: prepost processing}
P(\tau,n|m,\tau')=\sum_{\substack{a, b, x, y}}& \big[
      p(x|m) p_{e}(\tau|a,m) P(a,b|x,y) \nonumber\\
      &~~\times\,p(y|\tau')  p_{d}(n|\tau',b) \big]
   \end{align}
\end{rem}

We can alternatively express the equation \eqref{eq: NSRealizableChannel} using the transformed no-signalling correlation $\{P(\tau,n|m,\tau')\}_{\tau,n,m,\tau'}$. Consider the distributions  that appears in \eqref{eq: NSRealizableChannel}: $\{p(x|m)\}_{x\in[X],m\in M}$ and $\{p_{e}(\tau|a,m)\}_{\tau\in T, a\in [A], m\in M}$ for Alice, and  $\{p(y|\tau')\}_{\tau'\in T, y\in [Y]}$ and  $\{p_{d}(n|\tau',b)\}_{\tau'\in T, n\in N, b\in [B]}$ for Bob. When the channel $\mathscr{T}$ is absent or maximally noisy, the pre- and post-processing maps alone induce a transformation of the no-signalling correlation $\{P(a,b|x,y)\}_{a,b,x,y}$ into no-signalling correlation $\{P(\tau,n|m,\tau')\}_{\tau,n,m,\tau'}$ (remark \ref{rem: pre-post processing ns}). In contrast, when $\mathscr{T}$ is non-trivial, it couples Alice’s post-processed output $\tau$ to Bob’s preprocessing input $\tau'$ according to probability $\mathscr{T}(\tau'|\tau)$, thereby implementing a \emph{wiring} of the effective correlation. This is evident when the equation \ref{eq: NSRealizableChannel} is rewritten as follows after a simple manipulation using the chain rule for conditional probabilities, along with the assumption that the variables involved are structured according to a directed acyclic graph (DAG):

\begin{align}\label{eq: NSRealizableChannel2}
&\mathcal{N}(n|m)=\sum_{\tau,\tau'}\mathscr{T}(\tau'|\tau) P(\tau,n|m,\tau')
\end{align}

$P(\tau,n|m,\tau')=\sum_{\substack{a, b, x, y}} p(x|m) p_{e}(\tau|a,m) \times$ $P(a,b|x,y)p(y|\tau')  p_{d}(n|\tau',b)$ is the same as in \eqref{eq: prepost processing}. Effectively, Alice on receiving input $m\in M$, uses it as the input to the transformed no-signalling correlation $\{P(\tau,n|m,\tau')\}_{\tau\in T,n\in N,m\in M,\tau'\in T}$. Alice sends the output $\tau\in T$ she obtains as the message using the classical channel $\mathscr{T}$. Bob, upon receiving message $\tau'\in T$, uses it as the input to the shared correlation, and outputs the outcome $n\in N$ obtained from this correlation. Note that \eqref{eq: NSRealizableChannel2} makes sense since the correlation $\{P(\tau,n|m,\tau')\}_{\tau,n,m,\tau'}$ is no-signalling. In general, equation \eqref{eq: NSRealizableChannel2} may not yield valid conditional probability distribution $\{\mathcal{N}(n|m)\}_{m,n}$ if the correlation $\{P(\tau,n|m,\tau')\}_{\tau,n,m,\tau'}$ is Bob to Alice signalling.\\

For the task $\mathbb{C}_{M,N}[\mathscr{T},\{w^m_n\}]$, when Alice and Bob have access to shared randomness, they can jointly randomise over deterministic encodings and decoding based on ontic state $\lambda\in \Lambda$, which follows some distribution $\{p(\lambda)\}_{\lambda \in \Lambda}$. In this case, each deterministic encoding and decoding for Alice and Bob, respectively, can be represented by functions $\mathbb{E}: M\to T$ and $\mathbb{D}: T\to N$. The conditional probability $\{\mathcal{N}_{\Lambda}(n|m)\}_{n,m}$ can be obtained by the parties while using shared randomness assistance to the classical channel $\mathscr{T}: T\to T$ described by a probability distribution $\mathscr{T}(\tau'|\tau)$ if it there is a probability distribution $\{p(\lambda)\}_{\lambda \in \Lambda}$, encoding scheme $\{p_{e}(\tau|m,\lambda)\}_{\tau\in T,\lambda\in\Lambda,m\in M}$ and decoding scheme $\{p_{d}(n|\tau',\lambda)\}_{n\in 
 N,\tau'\in T,\lambda\in \Lambda}$ such that 
\begin{equation} \label{eq: SRRealizableChannel}
 \begin{split}
     \mathcal{N}_{\Lambda}(n|m)=\sum_{\substack{\lambda,\tau, \tau'}} \big[p(\lambda) p_{e}(\tau|m,\lambda)  \mathscr{T}(\tau'|\tau) 
      p_{d}(n|\tau',\lambda)\big]
\end{split}
\end{equation}
The maximum payoff, {\it i.e.}, local bound, for the communication task $\mathbb{C}_{M,N}[\mathscr{T},\{w^m_n\}]$ which can be obtained while using shared randomness assistance to the classical channel $\mathscr{T}$ is given by
\begin{equation} \label{eq: SRbound_generictask}
 \begin{split}
 s_{\Lambda}=\max_{\mathcal{N}_{\Lambda}}S(\mathcal{N}_{\Lambda})=\max_{\substack{\{p_{e}(\tau|\lambda,m)\},\\\{p_{d}(n|\tau',\lambda)\},\\\{p(\lambda)\}}} ~\sum_{m,n}w^m_n\mathcal{N}_{\Lambda}(n|m)
\end{split}
\end{equation}
As the payoff is linear in the conditional probabilities, the local bound while using shared randomness assistance can also be obtained by following some particular deterministic encoding and decoding strategy. If the observed conditional probability distribution $\{\mathcal{N}(n|m)\}_{n,m}$ does not admit a decomposition as expressed in \eqref{eq: SRRealizableChannel}, then it cannot be simulated using arbitrary shared randomness assistance to $\mathscr{T}: T\to T$. $\{\mathcal{N}(n|m)\}_{n,m}$ can be obtained using some non-local correlation assistance to $\mathscr{T}: T\to T$  if there is some $P\in \mathcal{NS}\backslash \mathcal{L}$ such that it can be expressed as in \eqref{eq: NSRealizableChannel} for some encoding and decoding strategy of Alice and Bob. In the task $\mathbb{C}_{M,N}[\mathscr{T},\{w^m_n\}]$, if the observed conditional probability $\{\mathcal{N}(n|m)\}_{n,m}$ can be obtained using some non-local correlation $P$ and leads to a payoff $S(\mathcal{N})=\sum_{m,n}w^m_n \mathcal{N}(n|m)>s_{\Lambda}$ then it essentially shows the advantage of assistance from non-classical correlation to the communication channel.\\

We will now introduce a main ingredient, the \emph{cutting the classical wire} procedure, to obtain an associated Bell scenario and a Bell inequality corresponding to the communication task.\\

\subsection{Cutting the classical wire}\label{subsec: wirecutting}
Corresponding to a communication task $\mathbb{C}_{M,N}[\mathscr{T},\{w^m_n\}]$ with success metric as defined in \eqref{eq: generic payoff}, we consider a bipartite non-local scenario wherein Alice receives an input $m\in M$ and produces an output $\tau\in T$, whereas Bob gets an input $\tau'\in T$ and outputs $n\in N$ (see Figure \ref{fig: PMtoBell}).  
In this non-local scenario, let us consider the following Bell functional tailored to the communication task,

\begin{equation}\label{eq: BellExpression}
    \mathbf{B}_{S,\mathscr{T}}(P)=\sum_{\substack{m,\tau, \tau', n}}w^m_n\mathscr{T}(\tau'|\tau) \, \, P(\tau,n|m,\tau')
\end{equation}
\begin{figure}[h!]
\includegraphics[scale=0.05]{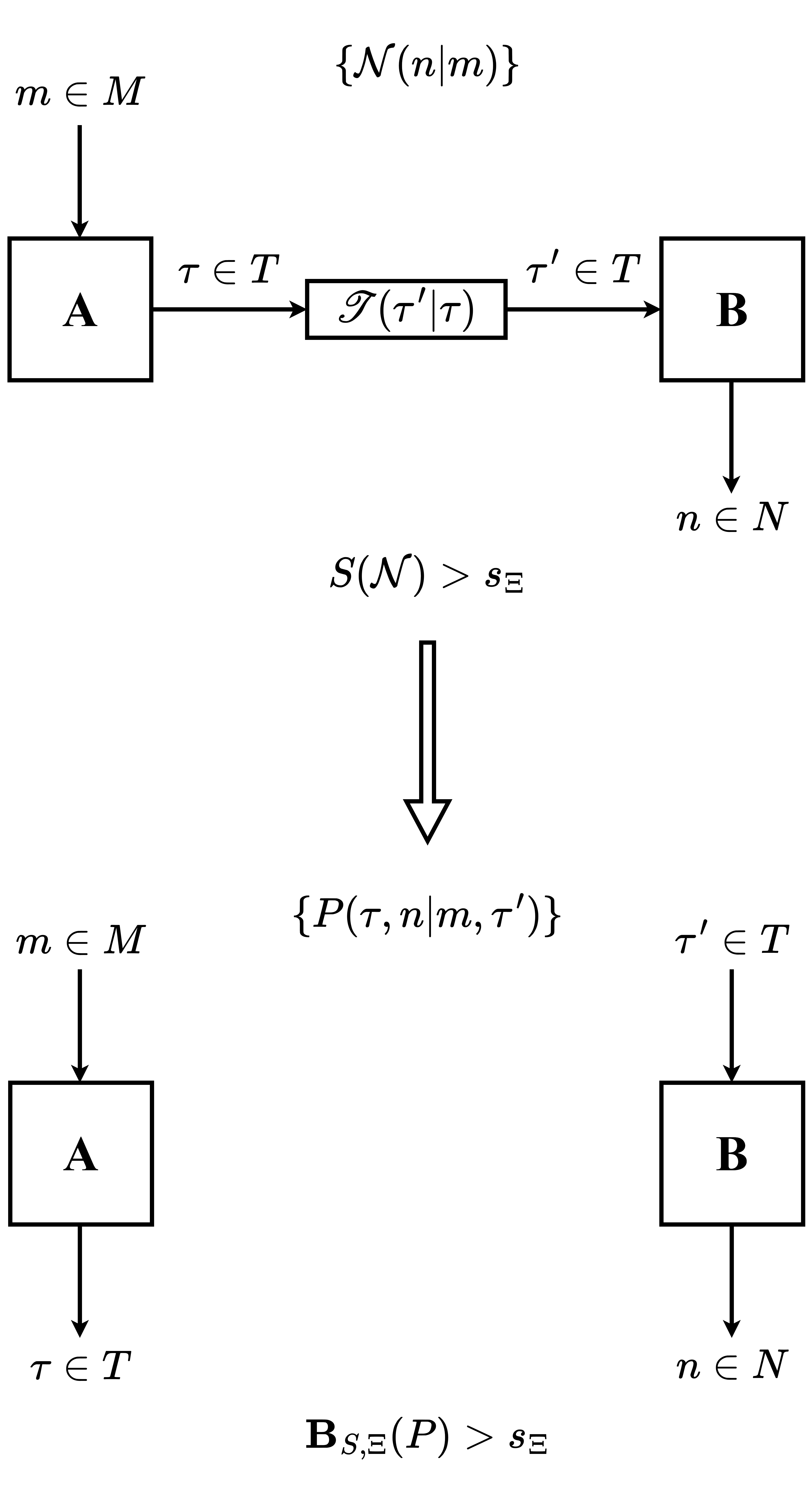}
    \caption{Schematic representation of the proof technique used in Theorem \ref{theo: PMtoBell}, illustrating the process of \emph{cutting the classical wire} to obtain the associated Bell inequality. The first figure (top) represents a task $\mathbb{C}_{M,N}[\mathscr{T},\{w^m_n\}]$ with payoff function $S$ and below is the associated Bell scenario corresponding to Bell functional  $\mathbf{B}_{S,\mathscr{T}}$.}
    \label{fig: PMtoBell}
\end{figure}
Here, $P=\{P(\tau,n|m,\tau')\}_{\tau\in T,n\in N,m\in M,\tau'\in T}$ is some no-signalling correlation in this non-local scenario. Let the local bound for $\mathbb{C}_{M,N}[\mathscr{T},\{w^m_n\}]$ be as given in \eqref{eq: SRbound_generictask}, and observed conditional probability $\{\mathcal{N}_{\Lambda}(n|m)\}_{n,m}$ can be expressed as in \eqref{eq: SRRealizableChannel}. Now, we demonstrate that the inequality $S(\mathcal{N}_{\Lambda})\leq s_{\Lambda}$ in the PM scenario implies the following Bell inequality based on the Bell expression in \eqref{eq: BellExpression}.

\begin{lem} \label{Lemma1}
Given the local bound $s_{\Lambda}$ on the success metric $S$ \eqref{eq: generic payoff}, for the task $\mathbb{C}_{M,N}[\mathscr{T},\{w^m_n\}]$, the following Bell inequality holds for all no-signalling correlations $P=\{P(\tau,n|m,\tau')\}_{m\in M,n\in N, \tau \in T,\tau'\in T}$ possessing a local hidden variable explanation.
\begin{equation} \label{BellInequality}
        \mathbf{B}_{S,\mathscr{T}}(P)\leq s_{\Lambda},
\end{equation}
where the Bell functional $\mathbf{B}_{S,\mathscr{T}}(P)$ is defined in \eqref{eq: BellExpression}.
\end{lem}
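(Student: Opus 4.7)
The plan is to prove the lemma by a direct substitution argument: I will show that if $P$ is a local correlation in the cut Bell scenario, then the Bell value $\mathbf{B}_{S,\mathscr{T}}(P)$ can be rewritten as the payoff $S(\mathcal{N}_\Lambda)$ of a legitimate shared-randomness-assisted strategy for the task $\mathbb{C}_{M,N}[\mathscr{T},\{w^m_n\}]$. The local bound $s_\Lambda$ defined in \eqref{eq: SRbound_generictask} then immediately gives the required upper bound. Intuitively, the wire-cutting procedure severs the channel $\mathscr{T}$ from the communication protocol, producing a bipartite non-local scenario; glueing the channel back in on local correlations recovers exactly the shared-randomness-assisted communication class.

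In more detail, suppose $P=\{P(\tau,n|m,\tau')\}$ admits a local hidden variable model, so that we can write
\begin{equation*}
P(\tau,n|m,\tau')=\sum_{\lambda\in\Lambda}p(\lambda)\,p_e(\tau|m,\lambda)\,p_d(n|\tau',\lambda),
\end{equation*}
with $\{p(\lambda)\}$ a distribution on $\Lambda$ and $\{p_e(\tau|m,\lambda)\}$, $\{p_d(n|\tau',\lambda)\}$ stochastic response functions for Alice and Bob respectively. Plugging this decomposition into the Bell functional \eqref{eq: BellExpression} and rearranging the sum yields
\begin{equation*}
\mathbf{B}_{S,\mathscr{T}}(P)=\sum_{m,n}w^m_n\!\!\sum_{\lambda,\tau,\tau'}\!p(\lambda)\,p_e(\tau|m,\lambda)\,\mathscr{T}(\tau'|\tau)\,p_d(n|\tau',\lambda).
\end{equation*}

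The inner sum over $\lambda,\tau,\tau'$ is precisely $\mathcal{N}_\Lambda(n|m)$ as specified by \eqref{eq: SRRealizableChannel}, with the very same stochastic maps $p_e,\,p_d$ and distribution $p(\lambda)$ serving as an encoding, decoding, and shared-randomness distribution for the communication task. Therefore $\mathbf{B}_{S,\mathscr{T}}(P)=\sum_{m,n}w^m_n\mathcal{N}_\Lambda(n|m)=S(\mathcal{N}_\Lambda)$, and by definition of $s_\Lambda$ as the maximum of this quantity over all admissible shared-randomness strategies we conclude $\mathbf{B}_{S,\mathscr{T}}(P)\leq s_\Lambda$.

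I do not expect any serious obstacle in this proof: the wire-cutting construction has been set up precisely so that the LHV decomposition of a local $P$ and an SR-assisted communication strategy are in bijection, with the channel $\mathscr{T}(\tau'|\tau)$ acting as a fixed wiring between Alice's output register and Bob's input register. The only mild thing to be careful about is verifying that $\sum_n \mathcal{N}_\Lambda(n|m)=1$ so that $\mathcal{N}_\Lambda$ is a bona fide conditional distribution, but this follows from the normalisation of $p_d(\cdot|\tau',\lambda)$, $\mathscr{T}(\cdot|\tau)$, $p_e(\cdot|m,\lambda)$, and $p(\cdot)$ and does not require the no-signalling condition on $P$ (which however is needed in the non-local part of the Theorem, where general $P\in\mathcal{NS}$ are considered).
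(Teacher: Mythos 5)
Your proposal is correct and is essentially the same argument as the paper's: both rest on identifying the LHV decomposition of $P$ with a shared-randomness-assisted encoding/decoding pair, so that $\mathbf{B}_{S,\mathscr{T}}(P)=S(\mathcal{N}_\Lambda)\leq s_\Lambda$. The paper merely phrases this as a proof by contradiction (assume $\mathbf{B}_{S,\mathscr{T}}(P)>s_\Lambda$, build the SR strategy, contradict the definition of $s_\Lambda$), whereas you give the direct contrapositive, which is a cosmetic rather than substantive difference.
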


\begin{proof}
We use the fact that all observed conditional probability $\{\mathcal{N}_{\Lambda}(n|m)\}_{n,m}$ for the task $\mathbb{C}_{M,N}[\mathscr{T},\{w^m_n\}]$ which can be expressed in the form \eqref{eq: SRRealizableChannel} satisfy the inequality $S(\mathcal{N}_{\Lambda}) \leq s_{\Lambda}$, to demonstrate that all correlations $P=\{P(\tau,n|m,\tau')\}_{m\in M,n\in N, \tau \in T,\tau'\in T}\in \mathcal{L}$, i.e., possessing a \emph{local-hidden variable} explanation of the form,

\begin{equation} \label{LHVExplaination}    P(\tau,n|m,\tau')=\sum_{\lambda}p(\lambda)p_A(\tau|m,\lambda)p_B(n|\tau',\lambda)
\end{equation}

for some probability distributions $\{p(\lambda)\}_{\lambda\in\Lambda}$, $\{p_A(\tau|m,\lambda)\}_{m\in M,\tau\in T, \lambda \in \Lambda}$, and $\{p_B(n|\tau',\lambda)\}_{n\in N,\tau'\in T, \lambda\in \Lambda}$ satisfy the Bell inequality,
$\mathbf{B}_{S,\mathscr{T}}(P)\leq s_{\Lambda}$.\\

Let us assume the contrary, {\it i.e.}, there exist correlations $P$ possessing a \emph{local-hidden variable} model of the form \eqref{LHVExplaination} such that $\mathbf{B}_{S,\mathscr{T}}(P)> s_{\Lambda}$. Then using the resource communication channel $\mathscr{T}$, encoding scheme $p_e(\tau|m,\lambda)=p_A(\tau|m,\lambda)$, the decoding scheme $p_d(n|\tau',\lambda)=p_B(n|\tau',\lambda)$, and shared randomness $\lambda\in\Lambda$ distributed according to the probability distribution $p(\lambda)$ the parties can obtain some conditional probability distribution $\{\mathcal{N}(n|m)\}_{n,m}$. Clearly, $\mathcal{N}(n|m)$ in this case can be expressed as in \eqref{eq: SRRealizableChannel}. Substituting this expression for conditional probability  $\mathcal{N}(n|m)$, we obtain $S(\mathcal{N})=\sum_{m,n} w^m_{n} \mathcal{N}(n|m)=\mathbf{B}_{S,\mathscr{T}}(P)> s_{\Lambda}$, which contradicts our initial assumption $S(\mathcal{N}_{\Lambda})\leq s_{\Lambda}$ for all $\mathcal{N}_\Lambda$ assuming decomposition as in \eqref{eq: SRRealizableChannel}.
\end{proof}

From a geometric perspective, Lemma \ref{Lemma1} implies that the hyperplane $S(\mathcal{N})\leq s_{\Lambda}$ in the space of all observed conditional probability distribution $\{\mathcal{N}(n|m)\}_{n,m}$ implies the existence of a corresponding hyperplane $\mathbf{B}_{S,\mathscr{T}}(P)\leq s_{\Lambda}$ in the no-signalling polytope $\mathcal{NS}$ of Bell correlations  $P: M \times T \to T\times N$ given by conditional probability $P=\{P(\tau,n|m,\tau')\}_{m\in M,n\in N, \tau \in T,\tau'\in T}$. We now turn to the connection between the violation of the inequality $S(\mathcal{N})\leq s_{\Lambda}$ in the task $\mathbb{C}_{M,N}[\mathscr{T},\{w^m_n\}]$ using no-signalling correlation assistance in task $\mathbb{C}_{M,N}[\mathscr{T},\{w^m_n\}]$ and the violation of the corresponding Bell inequality.

\begin{thm} \label{theo: PMtoBell}
If a no-signalling correlation 
$P = \{P(\tau,n|m,\tau')\}_{\tau \in T,\, n \in N,\, m \in M,\, \tau' \in T}$ violates the Bell inequality as defined in \eqref{BellInequality}, {\it i.e.} $\mathbf{B}_{S,\mathscr{T}}(P) > s_{\Lambda} $, then it's assistance in the task $\mathbb{C}_{M,N}[\mathscr{T},\{w^m_n\}]$ leads to the violation of local bound $s_{\Lambda}$ for the payoff, defined in \eqref{eq: generic payoff}. Also corresponding to the violation of the local bound  $s_{\Lambda}$ on the payoff for the task $\mathbb{C}_{M,N}[\mathscr{T},\{w^m_n\}]$ using no-signalling correlation assistance, {\it i.e.} $S(\mathcal{N})>s_{\Lambda}$, there exists a no-signalling correlation $P=\{P(\tau,n|m,\tau')\}_{\tau\in T,n\in N,m\in M,\tau'\in T}$, obtained by cutting the classical wire, which violates the corresponding Bell inequality \eqref{BellInequality}, i.e., $   \mathbf{B}_{S,\mathscr{T}}(P)>s_{\Lambda}$.
\end{thm}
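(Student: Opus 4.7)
The strategy is to observe that the Bell functional $\mathbf{B}_{S,\mathscr{T}}(P)$ has been engineered precisely so that, once the message wire is restored by routing through the channel $\mathscr{T}$, it literally coincides with the payoff $S(\mathcal{N})$. Both directions of the theorem will therefore follow from a single algebraic identity $S(\mathcal{N}) = \mathbf{B}_{S,\mathscr{T}}(P)$ evaluated on the correlation naturally associated to each strategy. The work splits into choosing the correct correlation to plug into each side and checking that the object chosen is genuinely no-signalling whenever it is treated as a Bell correlation.

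For the forward direction, suppose a no-signalling $P$ satisfies $\mathbf{B}_{S,\mathscr{T}}(P) > s_{\Lambda}$. I would let Alice and Bob use $P$ as their shared resource in the task with trivial pre- and post-processing: Alice inputs $m$ into $P$ and transmits the obtained outcome $\tau$ through $\mathscr{T}$, while Bob feeds the received symbol $\tau'$ into $P$ and outputs the obtained outcome $n$. By \eqref{eq: NSRealizableChannel2} the resulting channel is $\mathcal{N}(n|m) = \sum_{\tau,\tau'} \mathscr{T}(\tau'|\tau) P(\tau,n|m,\tau')$, and substituting into \eqref{eq: generic payoff} and reordering the sums yields $S(\mathcal{N}) = \mathbf{B}_{S,\mathscr{T}}(P) > s_{\Lambda}$. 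For the converse, assume some no-signalling $P' \in \mathcal{NS}$ together with input distributions $\{p(x|m)\}$, $\{p(y|\tau')\}$ and encoder/decoder maps $\{p_e(\tau|a,m)\}$, $\{p_d(n|\tau',b)\}$ realize $\mathcal{N}$ as in \eqref{eq: NSRealizableChannel} with $S(\mathcal{N}) > s_{\Lambda}$. I would \emph{cut the wire} by defining the Bell correlation $P(\tau,n|m,\tau')$ through the pre- and post-processing formula \eqref{eq: prepost processing}, and then insert this $P$ into \eqref{eq: BellExpression}; comparison with \eqref{eq: NSRealizableChannel2} gives $\mathbf{B}_{S,\mathscr{T}}(P) = S(\mathcal{N}) > s_{\Lambda}$, establishing the Bell violation.

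The proof is essentially bookkeeping, and the one conceptually nontrivial step is verifying that the wire-cut correlation $P$ constructed in the converse direction is actually no-signalling in the bipartite Bell scenario with inputs $(m,\tau')$ and outputs $(\tau,n)$. This is exactly the content of Remark \ref{rem: pre-post processing ns}: the marginal over $\tau$ depends only on $m$ and the marginal over $n$ only on $\tau'$, because $P'$ is itself no-signalling and the encoding and decoding maps act locally on their respective sides. I expect this to be the main point worth stating explicitly; the remainder is a direct equality of linear functionals, built into the definition of $\mathbf{B}_{S,\mathscr{T}}$ through the coefficients $w^m_n \, \mathscr{T}(\tau'|\tau)$ that naturally arise when the wire is cut and restored.
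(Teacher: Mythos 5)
Your proposal is correct and follows essentially the same route as the paper: the forward direction uses the identity strategy (feed $m$ into $P$, transmit the outcome, feed the received symbol into $P$, output the result) and reads off $S(\mathcal{N})=\mathbf{B}_{S,\mathscr{T}}(P)$, while the converse cuts the wire via the pre/post-processed correlation \eqref{eq: prepost processing} and appeals to Remark \ref{rem: pre-post processing ns} for its no-signalling property, which is exactly what the paper does. You have also correctly isolated the one step that is not pure bookkeeping — checking that the wire-cut correlation is no-signalling — as the load-bearing point.
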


\begin{proof}
Say, the no-signalling correlation $P = \{P(\tau,n|m,\tau')\}_{\tau \in T, n \in N, m \in M, \tau' \in T}$ violates the Bell inequality given in \eqref{BellInequality}, {\it i.e.} $\mathbf{B}_{S,\mathscr{T}}(P) > s_{\Lambda}$. In the task $\mathbb{C}_{M,N}[\mathscr{T},\{w^m_n\}]$, let Alice chooses her input $m\in M$ as her to query to the correlation $P$ and on obtaining output $\tau\in T$ from it, she sends $\tau$ as message through the channel $\mathscr{T}$. On receiving message $\tau'\in T$, Bob uses it as his query to the correlation $P$. He uses the outcome $n\in N$ from the correlation as his output for the task. Thus, the observed distribution $\{\mathcal{N}(n|m)\}_{n,m}$ using this strategy is given as $\mathcal{N}(n|m)=\sum_{\tau,\tau'}\mathscr{T}(\tau'|\tau) P(\tau,n|m,\tau')$. The payoff for the task $\mathbb{C}_{M,N}[\mathscr{T},\{w^m_n\}]$, using \eqref{eq: BellExpression}, is 
\begin{equation}\label{eq: commtobellviolatio}
    S(\mathcal{N})=\sum_{m,n} w^m_{n} \mathcal{N}(n|m)=\mathbf{B}_{S,\mathscr{T}}(P) > s_{\Lambda}
\end{equation}
Now, say there is a conditional probability distribution $\{\mathcal{N}(n|m)\}_{n,m}$, such that $S(\mathcal{N})>s_{\Lambda}$, which can be generated during the task $\mathbb{C}_{M,N}[\mathscr{T},\{w^m_n\}]$ given a no-signalling correlation $\tilde{P}$ assistance to the classical channel $\mathscr{T}: T \to T$ specified by the conditional probability distribution $\mathscr{T}(\tau'|\tau)$. Then $\mathcal{N}(n|m)$ can be expressed as in \eqref{eq: NSRealizableChannel} and \eqref{eq: NSRealizableChannel2}.  The \emph{cutting the classical wire} procedure (see Fig. \ref{fig: PMtoBell}) implies the existence a no-signalling correlation $P=\{P(\tau,n|m,\tau')\}_{\tau\in T,n\in N,m\in M, \tau'\in T}$ which when supplied as assistance to channel $\mathscr{T}$ realizes the distribution $\{\mathcal{N}(n|m)\}_{n,m}$.\\

This can be directly seen from the equations \eqref{eq: NSRealizableChannel} and \eqref{eq: NSRealizableChannel2}.  Using \eqref{eq: NSRealizableChannel}, there is an encoding and decoding strategy for Alice and Bob for realising the distribution $\{\mathcal{N}(n|m)\}_{n,m}$ using the no-signalling correlation $\tilde{P}$ assistance to channel $\mathscr{T}$. Since \eqref{eq: NSRealizableChannel} can be rewritten as the equation \eqref{eq: NSRealizableChannel2}, according to which $\mathcal{N}(n|m)=\sum_{\tau,\tau'}\mathscr{T}(\tau'|\tau) P(\tau,n|m,\tau')$ where $\{P(\tau,n|m,\tau')\}_{\tau,n,m,\tau'}$ is another no-signalling correlation (related to $\tilde{P}$ using remark \ref{rem: pre-post processing ns}). Consequently, the fact that $S(\mathcal{N})>s_{\Lambda}$ implies,
\begin{align}\label{eq: belltocommviolatio}
    \mathbf{B}_{S,\mathscr{T}}(P)&=\sum_{\substack{m ,\tau', \tau, n}}w^m_n \, \mathscr{T}(\tau'|\tau) \, P(\tau,n|m,\tau')\nonumber\\
    &=\sum_{m, n}w^m_n \mathcal{N}(n|m)=S(\mathcal{N}) >s_{\Lambda}
\end{align}
\end{proof}

 A direct consequence of Theorem \ref{theo: PMtoBell} is that, for any given communication task $\mathbb{C}_{M,N}[\mathscr{T},\{w^m_n\}]$, the maximum value for the payoff in \eqref{eq: generic payoff} while using shared quantum correlations is equal to the maximum quantum violation of the associated Bell inequality in \eqref{BellInequality} (see equation \eqref{eq: commtobellviolatio},\eqref{eq: belltocommviolatio}). Say, the maximum payoff using assistance from quantum correlation for the task $\mathbb{C}_{M,N}[\mathscr{T},\{w^m_n\}]$ is $s_Q$. Say, it is achieved by some quantum correlation $P$ assistance. From \eqref{eq: belltocommviolatio}, $\mathbf{B}_{S,\mathscr{T}}(P)=s_Q$. If the highest Bell violation while using quantum correlation is $\mathbf{B}_{S,\mathscr{T}}(P')=\tilde{s}_{Q}\geq s_Q$, which is achieved by $P'$ then the assistance from this correlation in task $\mathbb{C}_{M,N}[\mathscr{T},\{w^m_n\}]$ yields a payoff $\tilde{s}_{Q}$ using \eqref{eq: commtobellviolatio}. Thus, $\tilde{s}_{Q}=s_Q$. This equivalence allows one to probe the maximal achievable quantum advantage in communication task $\mathbb{C}_{M,N}[\mathscr{T},\{w^m_n\}]$ by well-known techniques of find the maximum quantum violation of the associated Bell inequality using non-commuting polynomial optimisation (NPO) \cite{Navas_2008, NVMethod, Navascues2007, Pauwels2022}. Next, we will introduce \emph{wire-reading}, which will be used in the correlation-assisted communication tasks that we will construct.

\subsection{Wire reading}\label{subsec: wirereading}
In the PM scenario, the transmitted message through a classical channel $\mathscr{T}: T\to T$ can be observed without interference. Thus, the communicated alphabet can be recorded without interrupting the communication \cite{Cavalcanti2022, VanHimbeeck2019quantumviolationsin, PhysRevLett.125.050404}. We refer to this as \emph{wire-reading} (see Figure \ref{fig: Wirereading}). As the classical message can be read, we can define a linear payoff for our tasks, which depends on the classical alphabet communicated in addition to Bob's output. Similar to the communication task defined before, let $M$ and $N$ denote the set of inputs and outputs, respectively, for Alice and Bob. Also, they have access to a one-way classical communication channel $\mathscr{T}:T\to T$ where $\mathscr{T}(\tau'|\tau)$ denotes the probability of Bob receiving message $\tau'\in T$ when Alice sends $\tau \in T$. The figure of merit depends on the conditional probability $\{\mathcal{N}(\tau',n|m)\}_{n\in N, m\in M,\tau'\in T}$ of Bob receiving message $\tau'\in T$ and giving output $n\in N$, provided Alice receives $m\in M$ as input, {\it i.e.}, 
\begin{equation}\label{eq: generic wirereding payoff}
{S_{W}}(\mathcal{N})=\sum_{m,\tau',n} w^m_{\tau',n} ~ \mathcal{N}(\tau',n|m)
\end{equation}
The coefficients of the linear payoff function in this case are $w^{m}_{n,\tau'} \in \mathbb{R}$ $\forall n\in N, m\in M,\tau'\in T$. We will denote this task in the PM scenario as $\mathbb{CW}_{M,N}[\mathscr{T},\{w^m_{\tau',n}\}]$. Remarkably, when the classical channel $\mathscr{T}$ is noiseless, this task can also be interpreted using another communication task $\mathbb{C}_{\tilde{M},\tilde{N}}[\mathscr{T},\{\tilde{w}^m_{n}\}]$ in the same scenario whose linear payoff is independent of the transmitted message. More specifically, we have the following:\\

\begin{figure}[h!]
\includegraphics[scale=0.095]{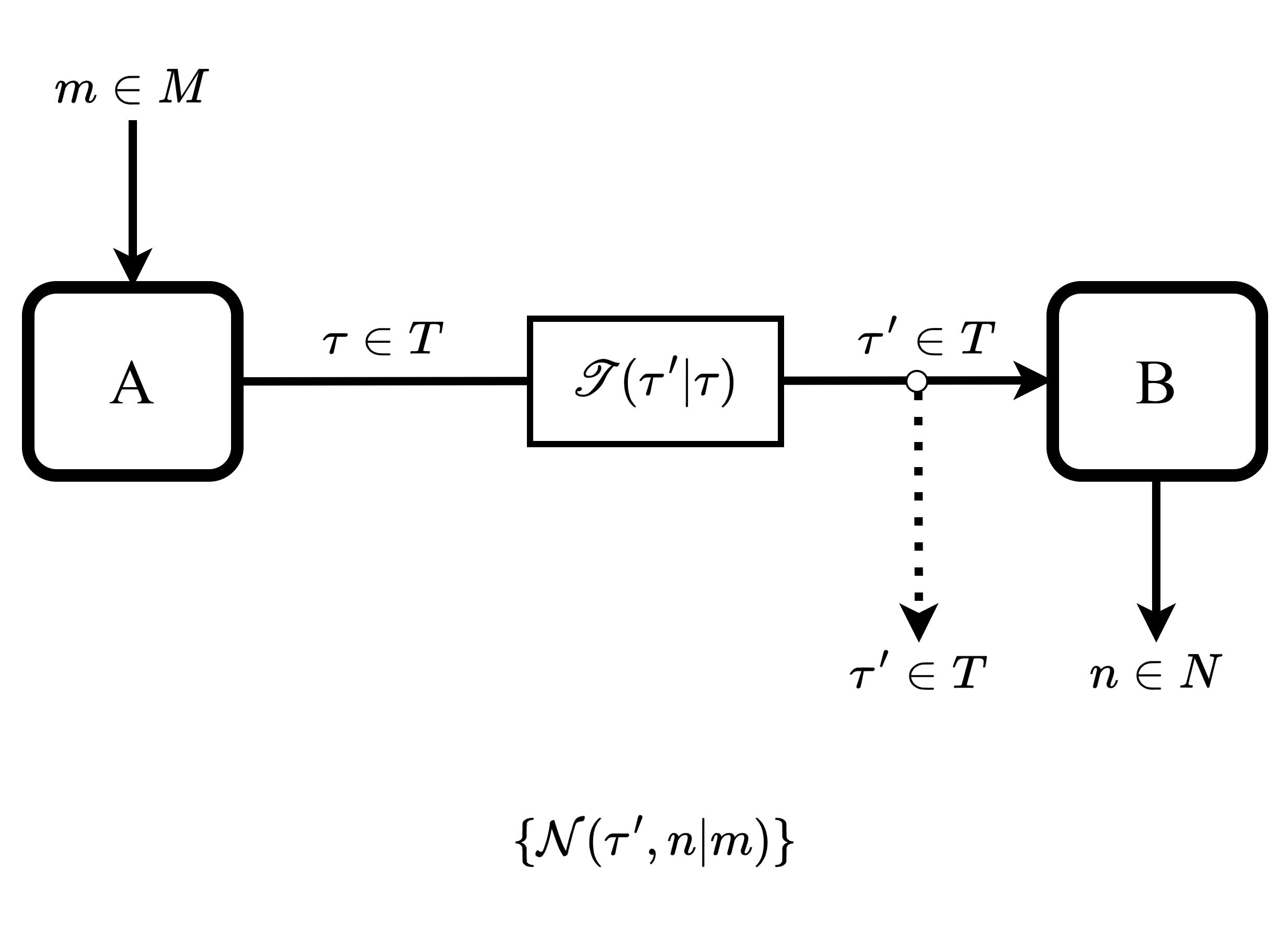}
    \caption{Schematic representation of the task $\mathbb{CW}_{M,N}[\mathscr{T},\{w^m_{\tau',n}\}]$ with wire-reading.}
    \label{fig: Wirereading}
\end{figure}

\begin{prop}\label{prop:wirereading}
For every task $\mathbb{CW}_{M,N}[\mathscr{T},\{w^m_{\tau',n}\}]$ defined with a noiseless channel $\mathscr{T}$, there exist a corresponding task $\mathbb{C}_{\tilde{M},\tilde{N}}[\mathscr{T},\{\tilde{w}^m_{n}\}]$, such that for every no-signalling correlation $P$ assistance, the optimal payoff $S^{opt}$ in the former task is equal the the optimal payoff $S_{W}^{opt}$ in the latter task while using the same correlation.
\end{prop}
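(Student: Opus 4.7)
The plan is to first construct $\mathbb{C}_{\tilde{M},\tilde{N}}[\mathscr{T},\{\tilde{w}^m_{n}\}]$ explicitly from $\mathbb{CW}_{M,N}[\mathscr{T},\{w^m_{\tau',n}\}]$ and then verify equality of the two optima through a two-sided argument, for each fixed no-signalling correlation $P$. The natural construction is $\tilde{M}:=M$, $\tilde{N}:=T\times N$, and $\tilde{w}^m_{(\tilde{\tau},n)}:=w^m_{\tilde{\tau},n}$. The underlying intuition is that noiselessness of $\mathscr{T}$ delivers $\tau'=\tau$ to Bob perfectly, so he may append this received symbol to his declared output at zero communication cost, promoting the wire-reading joint statistic $\mathcal{N}(\tau',n|m)$ to a genuine Bob-output distribution $\mathcal{N}'((\tilde{\tau},n)|m)$ in the regular task.

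For the direction $S^{opt}(P)\geq S_{W}^{opt}(P)$, I would take any $\mathbb{CW}$-strategy using $P$ --- specified by pre-processings $p(x|m),\,p(y|\tau')$, encoding $p_{e}(\tau|a,m)$, and decoding $p_{d}(n|b,\tau')$ --- and lift it to a $\mathbb{C}$-strategy that retains the same $P$, the same pre-processings, and the same encoding, replacing the decoding by $p'_{d}((\tilde{\tau},n)|b,\tau'):=\delta_{\tilde{\tau},\tau'}\,p_{d}(n|b,\tau')$. Substituting into the extension of \eqref{eq: NSRealizableChannel} to the enlarged output alphabet and using $\mathscr{T}(\tau'|\tau)=\delta_{\tau',\tau}$ gives $\mathcal{N}'((\tau',n)|m)=\mathcal{N}(\tau',n|m)$, and the identification $\tilde{w}^m_{(\tau',n)}=w^m_{\tau',n}$ then immediately turns the regular-task payoff into the wire-reading payoff.

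For the converse direction $S^{opt}(P)\leq S_{W}^{opt}(P)$, I would take an optimal $\mathbb{C}$-strategy, pass to its induced wired no-signalling correlation $\tilde{P}(\tau,(\tilde{\tau},n)|m,\tau')$ via Remark \ref{rem: pre-post processing ns}, and reorganise the classical pre- and post-processing acting on the same $P$ so that Alice's transmitted $\tau$ carries what Bob would otherwise have declared as $\tilde{\tau}$, while Bob's decoding retains only the $n$-component. Since the noiseless $\mathscr{T}$ forces the diagonal $\tau'=\tau$, both the $\mathbb{C}$-payoff $\sum w^m_{\tilde{\tau},n}\mathcal{N}'((\tilde{\tau},n)|m)$ and the wire-reading payoff $\sum w^m_{\tau',n}\mathcal{N}(\tau',n|m)$ are to be evaluated on the same diagonal in the wired correlation, and the two weighted sums coincide. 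The main obstacle, and the delicate step of the proof, is carrying out this re-wiring explicitly using only local pre- and post-processing of the same $P$ in the sense of \eqref{eq: prepost processing} while keeping the resulting correlation no-signalling; the decisive structural fact that makes the matching possible is the diagonal constraint imposed by noiselessness, which for a noisy $\mathscr{T}$ would fail and break the payoff correspondence.
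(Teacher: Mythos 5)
Your construction takes $\tilde{M}=M$, $\tilde{N}=T\times N$, and $\tilde{w}^m_{(\tilde{\tau},n)}=w^m_{\tilde{\tau},n}$. The easy direction $S^{opt}\geq S_W^{opt}$ (declare $\tilde{\tau}=\tau'$) is fine, but the converse $S^{opt}\leq S_W^{opt}$ fails, and the step you flag as ``delicate'' is in fact a genuine obstruction. The problem: in $\mathbb{C}_{\tilde{M},\tilde{N}}$, Bob's declared label $\tilde{\tau}$ can depend on his correlation outcome $b$, which Alice cannot anticipate. So you cannot, in general, ``reorganise the pre- and post-processing so that Alice's transmitted $\tau$ carries what Bob would otherwise have declared as $\tilde{\tau}$,'' because $\tilde{\tau}=\tilde{\tau}(\tau',b)$ is not computable on Alice's side.

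Here is a concrete counterexample. Take $T=M=N=\{0,1\}$, a noiseless bit channel, and $P$ the PR box, $P(a,b|x,y)=\tfrac12\,\delta_{a\oplus b,\,xy}$. Choose weights $w^m_{\tau',n}=\delta_{\tau',0}\,\delta_{n,m}$, so the only scoring pairs are $(\tau',n)=(0,m)$. In the wire-reading task, any nonzero contribution requires the \emph{actual} transmitted symbol to be $0$; but if Alice always sends $\tau=0$, then by no-signalling Bob's $b$ is independent of $m$, and his best guess of $n=m$ yields payoff $1$. Randomising the transmission does not help, since rounds with $\tau=1$ contribute zero. Thus $S_W^{opt}=1$. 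In your regular task, Alice may query $x=m$, obtain $a$, and send $\tau=a$; Bob receives $\tau'=a$, queries $y=1$, obtains $b$ with $a\oplus b=m$, hence learns $m=\tau'\oplus b$, and declares $(\tilde{\tau},n)=(0,m)$, yielding payoff $2=S_W^{alg}$. So $S^{opt}=2>1=S_W^{opt}$, breaking the claimed equality.

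The paper's proof closes this gap by \emph{enlarging Alice's input set} to $\tilde{M}=M\cup T$ and assigning a large penalty $\tilde{w}^m_{t,n_0}=-\Theta(1-\delta_{t,m})$ for $m\in\tilde{M}\setminus M$. This penalty forces Bob's first output component to reproduce the received message (up to a permutation of $T$), i.e.\ $\tilde{\tau}=Perm^{-1}(\tau')$, eliminating precisely the $b$-dependence your construction permits. The permutation is then absorbed into Alice's encoding, and the reduction to wire-reading goes through. Without this auxiliary input block and the penalty coupling, the diagonal constraint $\tau'=\tau$ from noiselessness is not, by itself, enough to force the payoff correspondence.
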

\begin{proof}
    Please refer to appendix \ref{appendix:wire-reading} for the proof.
\end{proof}
As a consequence of Proposition \ref{prop:wirereading}, the shared randomness-assisted bound, entanglement-assisted bound and no-signalling correlation-assisted bounds for the payoff in task $\mathbb{CW}_{M,N}[\mathscr{T},\{w^m_{\tau',n}\}]$ are equal to the respective bounds for some task $\mathbb{C}_{\tilde{M},\tilde{N}}[\mathscr{T},\{\tilde{w}^m_{n}\}]$.\\

Wire-reading in communication tasks can reveal the advantage of entanglement assistance, which may not exist otherwise. This can be easily shown through a one-way communication task $\mathbb{C}_{M,N}[\mathscr{T},\{w^m_n\}]$ where Alice has access to a noiseless one-bit channel $\mathscr{T}:\{0,1\}\to \{0,1\}$, {\it i.e.}, $T=\{0,1\}$ and $\mathscr{T}(\tau'|\tau)=\delta_{\tau',\tau}$. The input set for Alice is $M=\{1,2,3,4\}$, and Bob gives output from the set $N=\{0,1\}$ (see Figure \ref{fig: Wirereading2}). For the success metric $S_{W}(\mathcal{N})=\sum_{m,n} w^m_{n} \mathcal{N}(n|m)$ in this case, the maximum payoff which is achievable when optimised over all shared correlations and strategies of Alice and Bob is given as $S_{W}^{alg}=\sum_{m\in M} [\max_{n\in N}\{w^m_{n}\}]$. We will refer to this quantity as the {\it algebraic maximum} for the payoff. For this task, $S_{W}^{alg}$  can be achieved while using shared randomness assistance to a noiseless bit channel. Upon receiving the input $m$, Alice can encode $\tau=\tilde{n}$ such $w^m_{\tilde{n}}=\max_{n\in N}\{w^m_{n}\}$ and Bob can output $n=\tau'=\tilde{n}$. Thus, the communication task described above cannot witness the advantage of assistance from any non-classical correlations to the communication channel.\\

\begin{figure}[h!]
\includegraphics[scale=0.095]{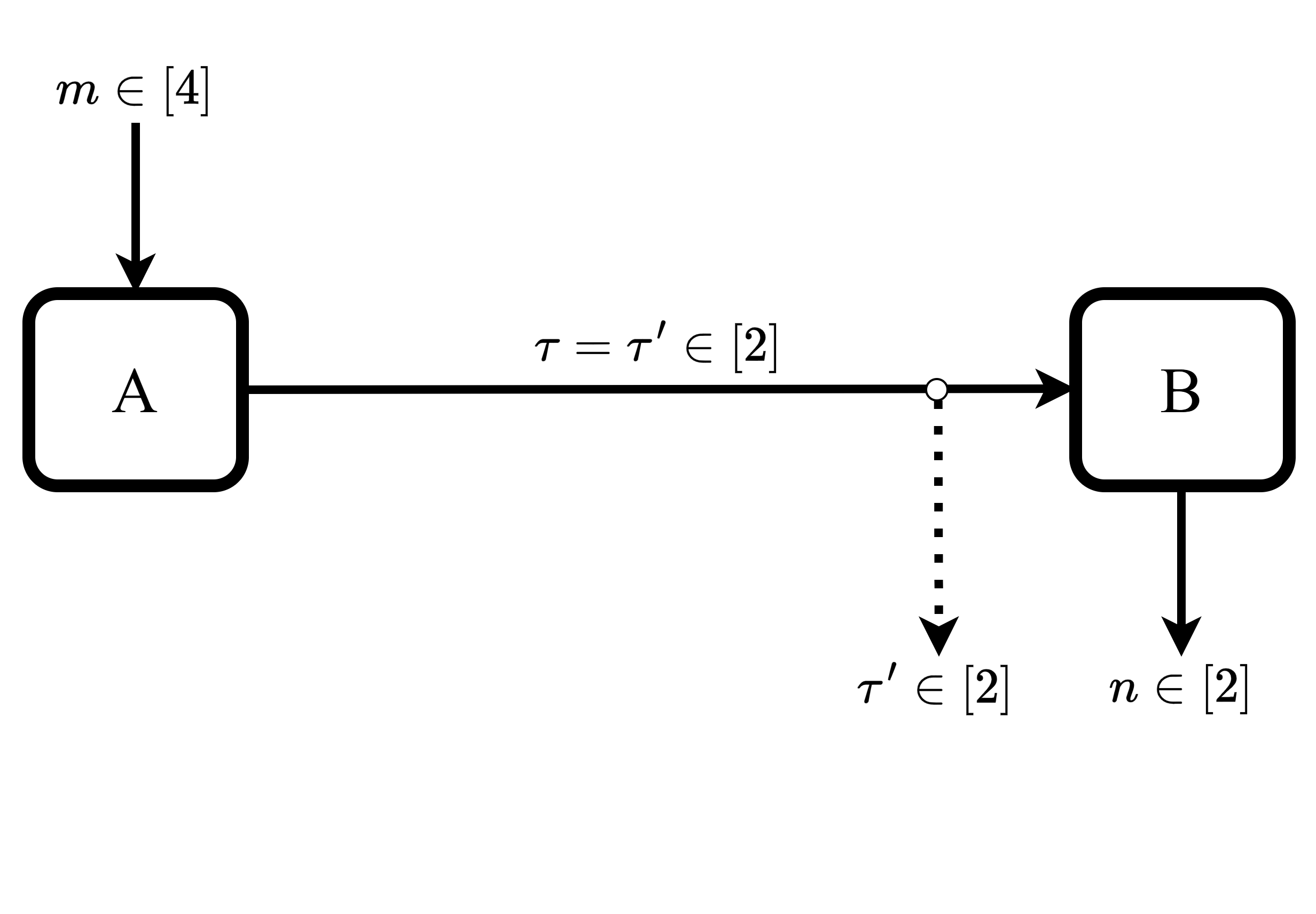}
    \caption{Schematic representation of a simple task with $1$ bit channel where wire-reading helps witness advantage of non-local correlations over shared randomness.}
    \label{fig: Wirereading2}
\end{figure}

In the same PM scenario, now consider a specific task $\mathbb{CW}_{M,N}[\mathscr{T},\{w^m_{\tau',n}\}]$ whose success metric is defined as in \eqref{eq: generic wirereding payoff}. The coefficients $w^m_{\tau',n}$ of the payoff function are defined in Table \ref{tab:wire reading example}. The algebraic maximum for the payoff is $S_{W}^{alg}= 1$. Using arbitrary shared randomness assistance to the classical channel, it can be shown that the maximum achievable payoff (local bound) is $\frac{3}{4}$. For some deterministic decoding strategy by Bob, say, $\mathbb{D}: T\to N$ where $n=D(\tau')=\tau'$,  an optimal deterministic encoding for Alice is given by $\mathbb{E}: M\to T$ where $E(1)=E(4)=0$ (or equivalently $1$), $E(2)=0$ and $E(3)=1$. This deterministic strategy yields a payoff $\frac{3}{4}$. Similarly, the maximum payoff while using any other deterministic encoding and decoding is upper bounded by $\frac{3}{4}$. As the payoff function is linear in conditional probabilities, the maximum payoff under shared randomness assistance can be achieved using an optimal deterministic encoding and decoding. Thus, the payoff for this task is upper bounded by $\frac{3}{4}$ when the parties have access to shared randomness. In the Section \ref{subsec: type0CS} (take $d=2,k=2$), we will show the advantage of assistance from non-local shared resources (Theorem \ref{theo: CS0 ns facet}, Corollary \ref{corr: CS0 noisy nl facet}) in this communication task.\\

\begin{table}[h!]
    \centering
    \begin{tabular}{ccccc}
        $m\backslash (\tau',n)$ & $(0,0)$ & $(0,1)$ & $(1,0)$ & $(1,1)$\\
        $1$ & $\frac{1}{4}$ & $0$ & $0$ & $\frac{1}{4}$\\
        $2$ & $\frac{1}{4}$ & $0$ & $\frac{1}{4}$ & $0$\\
        $3$ & $0$ & $\frac{1}{4}$ & $0$ & $\frac{1}{4}$\\
        $4$ & $0$ & $\frac{1}{4}$ & $\frac{1}{4}$ & $0$\\
    \end{tabular}
    \caption{The coefficients $\{w^{m}_{n,\tau'}\}$ in the payoff function for a task where Alice gets input $m\in M=\{1,2,3,4\}$ and Bob produces output $n\in N=\{0,1\}$. Here, $\tau'\in T=\{0,1\}$ is the received classical message using noiseless bit channel $\mathscr{T}:\{0,1\}\to \{0,1\}$.}
    \label{tab:wire reading example}
\end{table}

\section{Classical Communication Tasks Witnessing Non-locality}
Now, we introduce some families of one-way classical communication tasks based on wire-reading. We provide tight upper bounds on the payoff for these tasks when the classical communication is assisted by arbitrary shared randomness. We also show the advantage of shared entanglement and other non-local correlations in these tasks. The first family of tasks reveals the advantage of sharing any correlation on a non-local facet of the no-signalling polytope.

\subsection{Task: $\mathbb{CS}[d,k]$}\label{subsec: type0CS}
{\it Outline:} For the task $\mathbb{CS}[d,k]$, Alice is given a random function $
f:[d]\to[k]$. She is allowed to send an element $y\in[d]$ to Bob using a classical channel. Bob's goal is to output a value 
$b\in[k]$ such that $b\neq f(y)$. In other words, Bob must avoid outputting the image of the element sent by Alice under the function $f$. Since 
$f$ is a function, each element in $[d]$ has a unique image in $[k]$, which must be avoided for the task to win.\\

In the PM scenario, Alice receives randomly chosen input $m=(m_1,\cdots,m_d)\in M=\{1,\cdots,k\}^d$ where $d,k\in \mathbb{N}$ and $d,k\geq 2$. Bob's output $n\in N=[k]$. Alice has access to a noiseless classical communication channel $\mathscr{T}:[d]\to [d]$, {\it i.e.}, $T=[d]$ and $\mathscr{T}(\tau'|\tau)=\delta_{\tau',\tau}$. Let $\{\mathcal{N}(\tau',n|m)\}_{\tau',n,m)}$ represent the conditional probability distribution that Bob receives message $\tau'\in [d]$ and outputs $n\in N$ when Alice gets input $m\in M$. The success metric for this task is defined as
\begin{align}\label{eq: CS1 payoff}
&S_{W}(\mathcal{N})=\sum_{m,n,\tau'} w^m_{\tau',n} ~ \mathcal{N}(\tau',n|m)\nonumber\\
&\text{where } w^m_{\tau',n}=\frac{1}{k^d}(1-\delta_{n,m_{\tau'}})
\end{align}
Thus, given an input $m=(m_1,\cdots,m_d)$ for Alice, the parties get payoff $\frac{1}{k^d}$ if Bob's output $n\neq m_{\tau'}$ where $\tau'(=\tau)$ is the message sent/ received using the noiseless classical channel $\mathscr{T}$. We will refer to this task as $\mathbb{CS}[d,k]$. This task for $d=2,k=2$ is inspired by the task considered in \cite{Frenkel22} where Alice has $6$ possible inputs. The algebraic maximum for the payoff
\begin{align}\label{eq: CS1 payoffalg}
    S_{W}^{alg}&=\sum_{m} \max_{n\in N,\tau'\in [d]}\{w^m_{\tau',n}\}\nonumber\\
    &=\sum_{m\in M}\frac{1}{k^d}=1
\end{align}

For this task, we now provide a tight upper bound on the payoff when Alice and Bob have access to arbitrary shared randomness along with the communication channel.

\begin{thm}\label{theo: CS0 classical}
   For $\mathbb{CS}[d,k]$, the maximum payoff obtained while using shared randomness assistance to classical channel $\mathscr{T}:[d]\to [d]$ is $s_{\Lambda}=(1-\frac{1}{k^d})$.
\end{thm}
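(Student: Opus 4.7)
The plan is to exploit linearity of the payoff to reduce to deterministic strategies, then count the number of inputs that can be won against the best shared-randomness strategy. Because $S_W$ is a linear functional of $\{\mathcal{N}(\tau',n|m)\}$, and the set of shared-randomness-assisted conditional distributions is the convex hull of those arising from deterministic encoding/decoding pairs, the maximum $s_\Lambda$ is attained by some deterministic strategy, say an encoder $E:M\to[d]$ for Alice and a decoder $D:[d]\to[k]$ for Bob. The induced distribution is $\mathcal{N}(\tau',n|m)=\delta_{\tau',E(m)}\,\delta_{n,D(E(m))}$.

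Substituting this into the payoff in \eqref{eq: CS1 payoff}, I would write
\begin{equation*}
S_W=\frac{1}{k^d}\sum_{m\in M}\bigl(1-\delta_{D(E(m)),\,m_{E(m)}}\bigr),
\end{equation*}
so maximizing $S_W$ amounts to maximizing, over $E$ and $D$, the number of inputs $m\in[k]^d$ for which $D(E(m))\neq m_{E(m)}$.

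Next I would fix the decoder $D$ and optimize over $E$. For a given $m$, Alice wins (with this $D$) if and only if there exists at least one $\tau\in[d]$ with $D(\tau)\neq m_\tau$; if such a $\tau$ exists she simply sets $E(m)=\tau$. The only inputs on which she \emph{cannot} win under decoder $D$ are those $m$ satisfying $m_\tau=D(\tau)$ for every $\tau\in[d]$, and this condition uniquely pins down the single string $m^{*}=(D(1),\ldots,D(d))\in[k]^d$. Hence, for any decoder $D$, exactly $k^d-1$ of the $k^d$ equiprobable inputs can be won, giving a payoff of $(k^d-1)/k^d=1-1/k^d$. Since this bound is achieved by, e.g., $D\equiv 1$ with $E$ chosen to pick any $\tau$ with $m_\tau\neq 1$ when possible, the upper bound is tight.

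There is no real obstacle here beyond cleanly justifying the reduction to deterministic strategies (a standard extreme-point/convexity argument already invoked after \eqref{eq: SRbound_generictask}) and the uniqueness of the ``unwinnable'' input $m^{*}$. The only place one must be careful is in ensuring that Alice's optimal $\tau$ is allowed to depend on $m$ — which it is, since $E$ is an arbitrary function of $m$ — so that the single bad input $m^{*}$ cannot be inflated by further constraints on the encoding.
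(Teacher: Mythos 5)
Your proof is correct and follows essentially the same route as the paper: reduce to deterministic strategies by linearity of the payoff, fix the decoder $D$ and observe that the unique string $m^*=(D(1),\ldots,D(d))$ is unwinnable, yielding the bound $1-1/k^d$, and then exhibit an explicit encoder achieving it. Your version is slightly cleaner in that it shows the unwinnable input is unique for every decoder (so the optimum over $E$ for fixed $D$ is exactly $1-1/k^d$), whereas the paper first derives the upper bound and then separately verifies achievability; the content is the same.
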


\begin{proof}
    Please refer to Appendix \ref{app:proof CS0 classical} for the proof.
\end{proof}

We will show the advantage of assistance from some non-local correlations for this task. First, we will prove the following property about no-correlations, which would be useful for later discussions.  

\begin{lem}\label{lemma:nlfacet correlation property}
    Given a function $\mathbb{LB}:[Y]\to [B]$ and a correlation $P_{NL}=\{P_{NL}(a,b|x,y)\}_{a\in [A],b\in [B],x\in [X],y\in [Y]}$ on the non-local facet of the no-signalling polytope $\mathcal{NS}$, there exists an input $x\in [X]$ such that for all $a\in [A]$ there exists a non-empty set $\Phi^{x,a}_{\mathbb{LB}}(P_{NL})=\{y\in [Y]:P_{NL}(a,b=\mathbb{LB}(y)|x,y)=0\}$.   
\end{lem}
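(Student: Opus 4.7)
The plan is to proceed by contradiction, showing that failure of the conclusion lets me decompose $P_{NL}$ as a convex combination carrying strictly positive weight on an extremal deterministic local correlation --- precisely the form forbidden by non-local facet membership as defined in the text. First, I negate the claim: if it fails, then for every $x\in[X]$ there exists some $a_x\in[A]$ such that $P_{NL}(a_x,\mathbb{LB}(y)\mid x,y)>0$ for all $y\in[Y]$. The pair of deterministic maps $(x\mapsto a_x,\ y\mapsto\mathbb{LB}(y))$ defines a deterministic local strategy, which I encode as $P_L(a,b\mid x,y)=\delta_{a,a_x}\,\delta_{b,\mathbb{LB}(y)}$, a vertex of the local polytope $\mathcal{L}$ --- one of the extremal local correlations $P_L^{(i)}$ appearing in the facet definition.

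Next I peel off this vertex. Setting $\epsilon:=\min_{x,y} P_{NL}(a_x,\mathbb{LB}(y)\mid x,y)$, which is strictly positive by pointwise positivity and finiteness of $[X]\times[Y]$, and $Q:=\tfrac{1}{1-\epsilon}\bigl(P_{NL}-\epsilon P_L\bigr)$, I obtain $P_{NL}=\epsilon P_L+(1-\epsilon)Q$. The verification that $Q\in\mathcal{NS}$ is routine: entrywise non-negativity holds because $P_L$ is supported only on the entries $(a_x,\mathbb{LB}(y)\mid x,y)$, each of which has $P_{NL}$-value at least $\epsilon$; normalization per $(x,y)$ follows by linearity of $\sum_{a,b}$; and the no-signalling constraints, being linear, are inherited from $P_{NL}$ and $P_L$. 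Since $P_{NL}$ is non-local while $P_L$ is local, $P_{NL}\neq P_L$ forces $\epsilon<1$, so $p:=1-\epsilon\in[0,1)$. Rewriting $P_{NL}=pQ+(1-p)P_L^{(i)}$ with $Q\in\mathcal{NS}$ exhibits exactly the decomposition forbidden by the facet characterisation quoted in the text, contradicting the hypothesis.

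The main obstacle is more conceptual than technical: it is the step of extracting a single deterministic local vertex $P_L$ that can be subtracted off by a common scalar $\epsilon$ across every $(x,y)$. This requires, for each $x$, one outcome $a_x$ that avoids a zero of $P_{NL}(a_x,\mathbb{LB}(y)\mid x,y)$ simultaneously for all $y$ --- precisely what the negated conclusion delivers. Without that uniformity over $y$, one could not peel a deterministic vertex off in a single shot, and the argument would collapse; everything else (non-negativity, normalization, linearity of the no-signalling constraints) is essentially bookkeeping.
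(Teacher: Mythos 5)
Your proof is correct and follows essentially the same route as the paper: negate the conclusion to obtain, for each $x$, an outcome $a_x$ with $P_{NL}(a_x,\mathbb{LB}(y)\mid x,y)>0$ for all $y$; use the pair $(x\mapsto a_x,\ \mathbb{LB})$ to define a deterministic local vertex $P_L$; then peel $P_L$ off with a strictly positive weight, contradicting the facet condition. The only difference is that you make the peeling-off explicit by exhibiting $\epsilon=\min_{x,y}P_{NL}(a_x,\mathbb{LB}(y)\mid x,y)>0$ and verifying $Q=(P_{NL}-\epsilon P_L)/(1-\epsilon)\in\mathcal{NS}$, whereas the paper simply asserts the existence of the decomposition $P_{NL}=pP_L+(1-p)\tilde{P}$ with $p>0$; your version is slightly more self-contained but not a different argument.
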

\begin{proof}
   Please refer to Appendix \ref{app:lemma nlfacet correlation} for the proof. 
\end{proof}
\begin{thm}\label{theo: CS0 ns facet}
   For $\mathbb{CS}[d,k]$, assistance from any correlation $P_{NL}$ on a non-local facet of the no-signalling polytope $\mathcal{NS}:=\{P=\{P(a,b|x,y)\}_{a\in [A],b\in [k],x\in [X],y\in [d]}\}$ leads to the payoff $S_{W}^{alg}=1$.
\end{thm}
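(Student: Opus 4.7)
The plan is to construct an explicit protocol that uses $P_{NL}$ as assistance and exploits Lemma \ref{lemma:nlfacet correlation property} to guarantee that Bob never outputs the ``forbidden'' value $m_{\tau'}$ for any input $m$. Since the payoff coefficients $w^m_{\tau',n}=\frac{1}{k^d}(1-\delta_{n,m_{\tau'}})$ give the algebraic maximum $S_W^{alg}=1$ precisely when $P(n\neq m_{\tau'}\mid m)=1$ for every $m\in M=[k]^d$, it suffices to show such a deterministic-success strategy exists.

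Concretely, the strategy I would propose is as follows. Bob's protocol is fixed: upon receiving the classical message $\tau'\in[d]$ through the noiseless channel, he uses $y=\tau'$ as his input to $P_{NL}$ and outputs $n=b$ (recall Bob's output alphabet in $P_{NL}$ is $[k]=N$). For each Alice input $m\in M$, I would associate a preprocessing choice $x_m\in[X]$ and an encoding $e_m:[A]\to[d]$, so that on receiving $m$, Alice queries $P_{NL}$ with $x=x_m$, obtains outcome $a$, and sends $\tau=e_m(a)$ through the channel. Under this strategy, the conditional probability that the ``bad'' event $n=m_{\tau'}$ occurs, given input $m$, reduces to
\begin{equation}
P(n=m_{\tau'}\mid m)=\sum_{a\in[A]} P_{NL}\!\left(a,\,b=m_{e_m(a)}\,\big|\,x_m,\,y=e_m(a)\right).
\end{equation}

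The choice of $x_m$ and $e_m$ is dictated by Lemma \ref{lemma:nlfacet correlation property}. For each $m\in M$ I would define $\mathbb{LB}_m:[d]\to[k]$ by $\mathbb{LB}_m(y):=m_y$, i.e.\ the function that reads off the $y$-th component of the string $m$. Applying the lemma to $P_{NL}$ with this $\mathbb{LB}_m$ produces an input $x_m\in[X]$ such that for every outcome $a\in[A]$ the set $\Phi^{x_m,a}_{\mathbb{LB}_m}(P_{NL})=\{y\in[d]:P_{NL}(a,b=m_y\mid x_m,y)=0\}$ is non-empty. I then pick any representative $e_m(a)\in \Phi^{x_m,a}_{\mathbb{LB}_m}(P_{NL})$ to serve as Alice's encoding. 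By construction, every term in the sum above vanishes, hence $P(n=m_{\tau'}\mid m)=0$ for each $m$, yielding a total payoff of $\frac{1}{k^d}\sum_{m}1=1=S_W^{alg}$.

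The heavy lifting sits entirely in Lemma \ref{lemma:nlfacet correlation property}; given that lemma, the argument above is essentially a bookkeeping exercise in matching indices between the Bell scenario and the communication scenario. The one subtlety I would be careful about is ensuring that the roles of Bob's input set and the channel alphabet coincide: since $\mathscr{T}$ is noiseless with $T=[d]$ and Bob's input alphabet in $P_{NL}$ is also $[d]$, setting $y=\tau'$ is consistent and the encoding $e_m$ legitimately maps into $[d]$. The main conceptual point worth flagging explicitly is that the existence of the facet zero pattern guaranteed by the lemma is precisely what enables Alice, by choosing her message $\tau=e_m(a)$ as a function of her correlation outcome, to steer Bob's output away from the forbidden value $m_{\tau'}$ with certainty—an operation that cannot be simulated by shared randomness (cf.\ Theorem \ref{theo: CS0 classical}, which caps the classical payoff at $1-\tfrac{1}{k^d}$).
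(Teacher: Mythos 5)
Your proposal is correct and matches the paper's own proof essentially step for step: defining $\mathbb{LB}_m$ from the input string $m$, invoking Lemma~\ref{lemma:nlfacet correlation property} to obtain $x_m$ and a non-empty zero set $\Phi^{x_m,a}_{\mathbb{LB}_m}(P_{NL})$ for each $a$, letting Alice send a representative of that set (your $e_m(a)$ is the paper's $y_{(m,a)}$), and having Bob query the correlation with the received message and output the result. The vanishing of every term in $\sum_a P_{NL}(a,\,m_{e_m(a)}\mid x_m,e_m(a))$ is exactly the paper's argument that $\mathcal{N}(\tau',n=m_{\tau'}\mid m)=0$, so the two proofs are the same up to notation.
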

\begin{proof}
Please refer to Appendix \ref{app:CS0 nlfacet correlation} for the proof. 
\end{proof}
\begin{cor}\label{corr: CS0 noisy nl facet}
For any correlation $P_{NL}$ on a non-local facet of the no-signalling polytope $\mathcal{NS}$, if $(1-\frac{1}{k^{d-1}})<p\leq 1$ then assistance from correlation $P=pP_{NL}+(1-p)P_{WN}$ is advantageous over shared randomness assistance for the task $\mathbb{CS}[d,k]$.
\end{cor}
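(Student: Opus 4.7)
My plan is to exploit the linearity of the wire-reading payoff in the shared correlation, together with the relay-type strategy implicit in Theorem \ref{theo: CS0 ns facet}. Let $\sigma$ denote the strategy from that theorem: Alice chooses her pre-processing $x$ as a function of $m$, draws an outcome $a$ from the correlation, and sends a message $\tau$ picked from the non-empty set $\Phi^{x,a}_{m}$ guaranteed by Lemma \ref{lemma:nlfacet correlation property} (with $\mathbb{LB}=m$); Bob on receiving $\tau'$ uses $y=\tau'$, obtains $b$, and outputs $n=b$. By construction $S_W^{\sigma}(P_{NL})=1$. Since the conditional distribution $\mathcal{N}(\tau',n|m)$ in \eqref{eq: NSRealizableChannel} is linear in the supplied correlation, applying the same $\sigma$ to $P=pP_{NL}+(1-p)P_{WN}$ yields
\[
S_W^{\sigma}(P)=p\,S_W^{\sigma}(P_{NL})+(1-p)\,S_W^{\sigma}(P_{WN})=p+(1-p)\,S_W^{\sigma}(P_{WN}).
\]

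The central quantitative step is to show $S_W^{\sigma}(P_{WN})=(k-1)/k$. Under $P_{WN}(a,b|x,y)=1/(Ak)$ the outputs $a$ and $b$ are independent and uniform, so the observed distribution factorises as $\mathcal{N}(\tau',n|m)=\alpha(\tau'|m)\,\beta(n|\tau')$ with $\beta(n|\tau')=\tfrac{1}{k}\sum_{b}p_{d}(n|\tau',b)$ independent of $m$. Because Bob's decoding in $\sigma$ is the identity $p_{d}(n|\tau',b)=\delta_{n,b}$, one immediately has $\beta(n|\tau')=1/k$ uniform. Substituting into \eqref{eq: CS1 payoff} and using $\sum_{\tau'}\alpha(\tau'|m)=1$,
\[
S_W^{\sigma}(P_{WN})=\tfrac{1}{k^{d+1}}\sum_{m,\tau'}\alpha(\tau'|m)\,(k-1)=\tfrac{k-1}{k^{d+1}}\sum_{m}1=\tfrac{k-1}{k},
\]
irrespective of how $\alpha$ depends on $m$.

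Combining the two pieces with $s_{\Lambda}=1-1/k^{d}$ from Theorem \ref{theo: CS0 classical}, the advantage condition $S_W^{\sigma}(P)>s_{\Lambda}$ rearranges as
\[
p+(1-p)\tfrac{k-1}{k}>1-\tfrac{1}{k^{d}}\;\Longleftrightarrow\;\tfrac{p}{k}>\tfrac{1}{k}-\tfrac{1}{k^{d}}\;\Longleftrightarrow\;p>1-\tfrac{1}{k^{d-1}},
\]
which is the claimed threshold.

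The potential obstacle I anticipated---that $\alpha(\tau'|m)$ does not factorise cleanly across the components of $m$---is in fact dissolved by the structural asymmetry of $\sigma$: only Alice's encoding carries $m$-dependence, while Bob's decoding reads $b$ directly. If one instead allowed $m$-dependence to leak into $\beta$ (say by a different certification of Theorem \ref{theo: CS0 ns facet}), the same conclusion should follow by invoking an output-relabelling symmetry $b\mapsto b\oplus c$ of both $P_{WN}$ and the task coefficients $w^{m}_{\tau',n}$ to average $\beta$ back to uniform.
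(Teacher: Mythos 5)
Your proposal is correct and follows essentially the same route as the paper's proof in Appendix \ref{app:CS0 noisy nlfacet correlation}: re-run the protocol of Theorem \ref{theo: CS0 ns facet} with the shared correlation replaced by $P_{WN}$, compute $S_W^{\sigma}(P_{WN})=(k-1)/k$, and invoke linearity of the payoff in the supplied correlation to get $p\cdot 1+(1-p)\tfrac{k-1}{k}>1-\tfrac{1}{k^d}\iff p>1-\tfrac{1}{k^{d-1}}$. The only cosmetic difference is that you phrase the $P_{WN}$ computation via a factorisation $\mathcal{N}(\tau',n|m)=\alpha(\tau'|m)\beta(n|\tau')$ with $\beta$ uniform, whereas the paper writes the same quantity in terms of the multiplicity $r_{\tau',m}=A\,\alpha(\tau'|m)$ and uses $\sum_{\tau'}r_{\tau',m}=A$; the arithmetic is identical.
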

\begin{proof}
Please refer to Appendix \ref{app:CS0 noisy nlfacet correlation} for the proof.
\end{proof}

Notice that the shared randomness assisted bound on the payoff $s_{\Lambda}=(1-\frac{1}{k^d})$ for the task $\mathbb{CS}[d,k]$ converges to $S_{W}^{alg}=1$ polynomially in $k$ and exponentially in $d$. Consider a correlation $P_{NL}=\{P_{NL}(a,b|x,y)\}_{a\in [A],b\in [2],x\in [X],y\in [3]}$ which lies on a non-local facet of polytope $\mathcal{NS}$. For $\mathbb{CS}[d=3,k=2]$, while using a similar protocol as suggested in the proof of Corollary \ref{corr: CS0 noisy nl facet}, assistance from any correlation on the isotropic line joining correlation $P_{NL}$ and $P_{WN}$, {\it i.e.}, $P=pP_{NL}+(1-p)P_{WN}$ will not be advantageous over shared randomness if $p\leq \frac{3}{4}$. Next, we will introduce another family of communication tasks. For some of the tasks in this family, we will show that correlations which are described as $P=p P_{NL}+(1-p)P_{WN}$ (as before) will be advantageous if $\frac{1}{2}<p\leq 1$ (see Corllary \ref{corr:CS1xy22 noisy nl facet} and \ref{corr:13322 noisy nl facet}).\\ 

\subsection{Task: $\mathbb{CS}[\{P_{NL}^{*(i)}\}_{i\in I},d,k]$}\label{subsec: type1CS}
{\it Outline:} For the task $\mathbb{CS}[\{P_{NL}^{*(i)}\}_{i\in I},d,k]$, Alice is given a randomly chosen relation $R$ defined on the set $[d]\times [k]$ that is of certain type (satisfying condition (i) and (ii) discussed below). Alice is allowed to send an element $y\in [d]$  to Bob using the classical channel. Bob must output $b\in [k] $ such that it is an image of the element $y\in [d]$ under the relation $R$ to win. While we do not explicitly define all the relations that are possible inputs for Alice, it is implicitly defined next by starting with a function $\mathbb{LB}:[d]\to [k]$, invoking Lemma \ref{lemma:nlfacet correlation property} and compliance with conditions (i) and (ii) for this function. The condition (i) leads to a suboptimal payoff when assistance from shared randomness is allowed. The condition (ii) allows for a protocol which can be used to obtain maximum payoff when assistance from any correlation on the non-local facet defined by extremal points $\{P_{NL}^{*(i)}\}$ is accessible. Given a relation $R$, unlike a function, some element in $[d]$ may have multiple images in $[k]$, also some elements in $[d]$ {\it may not} have any image.\\

     For this task, Alice's input set $M$ is a set consisting of some relations $m\subset [d]\times [k]$, and Bob's output set $N=[k]$. To completely characterise the set of relations, let us introduce some of the notations that will be useful. As mentioned above in the outline, we would not explicitly define the set of relations but use some necessary conditions (i) and (ii) to define it.\\

Consider a non-local facet $Face\{P_{NL}^{*(i)}\}$ of the no-signalling polytope $\mathcal{NS}:=\{P=\{P(a,b|x,y)\}_{a\in [A],b\in [k],x\in [X],y\in [d]}\}$  $($with $B=k,Y=d)$ defined by the convex hull of some non-local extremal points $\{P_{NL}^{*(i)}\}_{i\in I}$. Here, $I$ denotes the subset of the indices labelling the extremal non-local correlations that span this facet.  Using Lemma \ref{lemma:nlfacet correlation property},
the following can be shown for non-local extremal points $\{P_{NL}^{*(i)}\}_{i\in I}$. Given a function $\mathbb{LB}:[d]\to [k]$, there exists an input $x^*\in [X]$, independent of the choice of $P_{NL}^{*(i)}$, such that $\forall~a\in [A]$ there exists a non-empty subset $\Phi^{x^*,a}_{\mathbb{LB}}=\bigcap_{j\in I}~\Phi^{x^*,a}_{\mathbb{LB}}(P_{NL}^{*(j)})~\subseteq [Y]$. For every function $\mathbb{LB}$, such an input $x^*\in[X]$ can be found as the contrary would imply that the Lemma \ref{lemma:nlfacet correlation property} does not hold for some correlations, say $P=\sum_{i\in I}\frac{1}{|I|}P_{NL}^{*(i)}$, on the non-local facet $Face\{P_{NL}^{*(i)}\}$ defined by these extremal correlations.\\ 

Thus, for all $y\in \Phi^{x^*,a}_{\mathbb{LB}}$ and every non-local correlation $P_{NL}^{*(i)}=\{P_{NL}^{*(i)}(a,b|x,y)\}_{a\in [A],b\in [k],x\in [X],y\in [d]}~$ where $i\in I$, 
\begin{align}\label{eq: defi phi xay}
     P_{NL}^{*(i)}&(a,b=\mathbb{LB}(y)|x^*,y)=0 
\end{align}

Additionaly, for each $y\in \Phi^{x^*,a}_{\mathbb{LB}}$, we would define the following set

\begin{equation}\label{eq: def phisxay}
   \Phi^{x^*,a,y}_{\mathbb{LB}}:=\{b\in [k]:P_{NL}^{*(i)}(a,b|x^*,y)>0 \text{ for some } i\in I\} 
\end{equation}

\begin{rem}\label{rem: empty phisxay}
    Notice that $\cup_{y\in\Phi^{x^*,a}_{\mathbb{LB}}}~\Phi^{x^*,a,y}_{\mathbb{LB}}= \varnothing$ only if $P_{NL}^{*(i)}(a|x^*)=0$ for all $i\in I$. Given $\Phi^{x^*,a,y}_{\mathbb{LB}}=\varnothing$ for $y\in\Phi^{x^*,a}_{\mathbb{LB}} $, $\sum_b P_{NL}^{*(i)}(a,b|x^*,y)=0=P_{NL}^{*(i)}(a|x^*)$ for all $i\in I$.\\
\end{rem}

Using these sets $\Phi^{x^*,a}_{\mathbb{LB}}$ and $\Phi^{x^*,a,y}_{\mathbb{LB}}$ defined above, we consider all relations $R_{\mathbb{LB}}\subset [d]\times [k]$ that satisfy the following two properties: 

\begin{itemize}\label{condition task 2}
    \item[(i)] $(y,b)\in R_{\mathbb{LB}}\implies$ $b\in [k]\backslash\{\mathbb{LB}(y)\}$.\\
    \item[(ii)] for all $a\in [A]$ such that $\cup_{y\in\Phi^{x^*,a}_{\mathbb{LB}}}~\Phi^{x^*,a,y}_{\mathbb{LB}}\neq \varnothing$, there exists $y\in \Phi^{x^*,a}_{\mathbb{LB}}$, for which $(y,b)\in R_{\mathbb{LB}}$ for all $b\in \Phi^{x^*,a,y}_{\mathbb{LB}}$.\\
\end{itemize}

$\mathcal{R}_{\mathbb{LB}}$ denotes the set of all such relations for each function $\mathbb{LB}$ and $\mathcal{R}=\cup_{\substack{\mathbb{LB}:[d]\to [k]}}\mathcal{R}_{\mathbb{LB}}$.\\ 

We will now define a task specific to the non-local facet $Face\{P_{NL}^{*(i)}\}$. For the task $\mathbb{CS}[\{P_{NL}^{*(i)}\}_{i\in I},d,k]$, Alice randomly receives some input $m\in M=\mathcal{R}$, {\it i.e.}, a relation as defined above. Bob gives some output $n\in N=[k]$. Alice can access a noiseless classical channel $\mathscr{T}:[d]\to [d]$ in this task, {\it i.e.}, $T=[d]$ and $\mathscr{T}(\tau'|\tau)=\delta_{\tau',\tau}$. Say $\{\mathcal{N}(\tau',n|m)\}_{\tau',n,m}$ represents the conditional probability that Bob receives message $\tau'\in [d]$ and outputs $n$ when Alice receives input $m$. The payoff for this task is given by

\begin{align}\label{eq: CS2 payoff}
&S_{W}(\mathcal{N})=\sum_{m,n,\tau'} w^m_{\tau',n} ~ \mathcal{N}(\tau',n|m)\nonumber\\
&\text{ where }  w^m_{\tau',n}= \begin{cases}
    &\omega_m \text{ if } (\tau',n)\in m\\
    &0 ~~~\text{ otherwise }
\end{cases}
\end{align}
Here, $\omega_m\geq 0$, $\sum_{m\in M} \omega_m=1$. Additionally, in this task for each function $\mathbb{LB}:[d]\to [k]$ there exists input $m\in \mathcal{R}_{\mathbb{LB}}$ such that $\omega_m> 0$. The algebraic maximum for the payoff, which can be obtained using some protocol and communication resource, for this task is $S_{W}^{alg}=\sum_{m\in M} \omega_m=1$. We give a tight upper bound on the payoff for the task when Alice and Bob have access to shared randomness, which can assist the classical communication channel. Note that, for the task, if $\omega_m= 0$ for some input $m\in M$, then the input does not contribute to the payoff and therefore can be dropped from the input set without significantly altering the task.

\begin{thm}\label{theo: CS1 classical bound}
   In $\mathbb{CS}[\{P_{NL}^{*(i)}\}_{i\in I},d,k]$, the maximum payoff obtained while using arbitrary shared randomness assistance is
   \begin{equation}
       s_{\Lambda}=\max_{\mathbb{D}:[d]\to [k]}\left[\sum_{\substack{m\in M:\\
     \exists \alpha\in[d] \land (\alpha,\mathbb{D}(\alpha))\in m}} \omega_m\right]<1\nonumber
   \end{equation}
\end{thm}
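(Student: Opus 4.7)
The plan is to first reduce the shared-randomness-assisted optimization to a maximization over deterministic encoding/decoding pairs, then to carry out the inner optimization over Alice's encoding for each fixed decoding, and finally to use condition (i) on the input set to obtain strict inequality.

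First, I would invoke the standard convexity argument already noted after \eqref{eq: SRbound_generictask}: since $S_W(\mathcal{N}_\Lambda)$ is linear in $\mathcal{N}_\Lambda$ and since, under shared randomness assistance to the noiseless channel $\mathscr{T}:[d]\to[d]$, any realizable $\mathcal{N}_\Lambda$ is a convex combination over $\lambda$ of distributions induced by deterministic encodings $\mathbb{E}:M\to[d]$ and deterministic decodings $\mathbb{D}:[d]\to[k]$, the supremum is attained at an extreme point. Hence $s_\Lambda=\max_{\mathbb{E},\mathbb{D}}\sum_{m\in M}\omega_m\,\mathbb{1}\!\left[(\mathbb{E}(m),\mathbb{D}(\mathbb{E}(m)))\in m\right]$, using the definition of the weights $\{w^m_{\tau',n}\}$ in \eqref{eq: CS2 payoff} and $\mathscr{T}(\tau'|\tau)=\delta_{\tau',\tau}$.

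Second, I would perform the inner maximization over $\mathbb{E}$ with $\mathbb{D}$ fixed. Since Alice's encoding $\mathbb{E}(m)$ can be chosen independently for each input $m\in M$, and each input $m$ contributes either $\omega_m$ (if $(\mathbb{E}(m),\mathbb{D}(\mathbb{E}(m)))\in m$) or $0$, the optimal Alice picks $\mathbb{E}(m)=\alpha$ for any $\alpha\in[d]$ with $(\alpha,\mathbb{D}(\alpha))\in m$ whenever such an $\alpha$ exists. This yields
\begin{equation*}
s_\Lambda=\max_{\mathbb{D}:[d]\to[k]}\sum_{\substack{m\in M:\\ \exists\,\alpha\in[d]\,\land\,(\alpha,\mathbb{D}(\alpha))\in m}}\omega_m,
\end{equation*}
which is exactly the claimed expression.

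Third, I would establish the strict inequality $s_\Lambda<1$. Fix any candidate $\mathbb{D}:[d]\to[k]$. Apply the labelling function $\mathbb{LB}:=\mathbb{D}$ in the construction of the input set: by the standing assumption that for each $\mathbb{LB}$ some $m\in\mathcal{R}_{\mathbb{LB}}$ carries strictly positive weight $\omega_m>0$, there exists such an $m^\star\in\mathcal{R}_{\mathbb{D}}$. Condition (i) in the definition of $\mathcal{R}_{\mathbb{LB}}$ gives $(y,b)\in m^\star \Rightarrow b\neq\mathbb{LB}(y)=\mathbb{D}(y)$, so for every $\alpha\in[d]$ we have $(\alpha,\mathbb{D}(\alpha))\notin m^\star$. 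Hence $m^\star$ is excluded from the sum, and the payoff under $\mathbb{D}$ is at most $\sum_{m\in M}\omega_m-\omega_{m^\star}=1-\omega_{m^\star}<1$. Since $\mathbb{D}$ was arbitrary, $s_\Lambda<1$.

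The only nontrivial step is the last one, where the \emph{existence} of a "blocking" input $m^\star\in\mathcal{R}_{\mathbb{D}}$ with $\omega_{m^\star}>0$ for \emph{every} decoding $\mathbb{D}$ is precisely what forces the sub-optimality of shared randomness; this uses condition (i) together with the stated assumption that the input set meets each $\mathcal{R}_{\mathbb{LB}}$ with positive weight. The remaining combinatorial manipulations are routine.
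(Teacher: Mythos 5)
Your proposal is correct and follows essentially the same approach as the paper's Appendix~G proof: both reduce to deterministic encodings/decodings via linearity, observe that with $\mathbb{D}$ fixed the optimal $\mathbb{E}(m)$ can be chosen input-by-input to hit any $\alpha$ with $(\alpha,\mathbb{D}(\alpha))\in m$, and then obtain strictness by setting $\mathbb{LB}=\mathbb{D}$ and invoking condition~(i) together with the assumption that each $\mathcal{R}_{\mathbb{LB}}$ carries positive weight. The only cosmetic difference is that the paper names the set of blocked inputs $M_{\mathbb{D}}$ whereas you work directly with the complementary existential condition.
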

\begin{proof}
    Please refer to Appendix \ref{app:CS1 classical bound} for the proof.
\end{proof}
We will show the advantage of assistance from some non-local correlations for this task.\\

\begin{thm}\label{theo: CS1 nl facet}
   For $\mathbb{CS}[\{P_{NL}^{*(i)}\}_{i\in I},d,k]$, assistance from any correlation $P_{NL}\in Face\{P_{NL}^{*(i)}\}$ on the non-local facet of the no-signalling polytope $\mathcal{NS}:=\{P=\{P(a,b|x,y)\}_{a\in [A],b\in [k],x\in [X],y\in [d]}\}$ leads to the payoff $S_{W}^{alg}=1$.
\end{thm}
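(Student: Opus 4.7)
The plan is to mirror the structure of Theorem \ref{theo: CS0 ns facet}, producing an explicit protocol tailored to the facet, and to verify that it wins with certainty on every input by exploiting the two conditions built into the definition of $\mathcal{R}_{\mathbb{LB}}$. Given the input $m\in M=\mathcal{R}$, Alice first identifies some function $\mathbb{LB}:[d]\to[k]$ such that $m\in\mathcal{R}_{\mathbb{LB}}$ (such an $\mathbb{LB}$ exists by the very definition of $\mathcal{R}$; if there are several, pick any). Using the strengthened version of Lemma \ref{lemma:nlfacet correlation property} stated just before the theorem, there is an input $x^{*}\in[X]$ (depending on $\mathbb{LB}$, but independent of $i\in I$) such that $\Phi^{x^*,a}_{\mathbb{LB}}\neq\varnothing$ for every $a\in[A]$.

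The protocol is as follows. Alice queries the shared correlation $P_{NL}$ on input $x^*$ and obtains outcome $a\in[A]$. Based on the pair $(a,m)$, she then selects a message $y=y(a,m)\in\Phi^{x^*,a}_{\mathbb{LB}}$ such that $(y,b)\in m$ for every $b\in\Phi^{x^*,a,y}_{\mathbb{LB}}$ -- the existence of such a $y$ is precisely condition (ii) in the definition of $\mathcal{R}_{\mathbb{LB}}$. She transmits $\tau=y$ through the noiseless channel, so Bob receives $\tau'=y$. Bob then queries $P_{NL}$ on input $y$, obtains outcome $b\in[k]$, and outputs $n=b$.

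It remains to verify that this protocol always yields $(\tau',n)=(y,b)\in m$ whenever Alice's outcome $a$ occurs with positive probability. The key observation is a convexity argument: writing $P_{NL}=\sum_{i\in I} q_i P_{NL}^{*(i)}$ with $q_i\geq 0$ and $\sum_i q_i=1$, one has $P_{NL}(a,b|x^*,y)>0$ only if $P_{NL}^{*(i)}(a,b|x^*,y)>0$ for some $i\in I$, i.e.\ only if $b\in\Phi^{x^*,a,y}_{\mathbb{LB}}$. Thus, conditional on Alice's outcome being $a$ and Bob's input being $y$, the support of Bob's outcome lies entirely inside $\Phi^{x^*,a,y}_{\mathbb{LB}}$, and condition (ii) then forces $(y,b)\in m$. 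The edge case $\Phi^{x^*,a,y}_{\mathbb{LB}}=\varnothing$ for every $y\in\Phi^{x^*,a}_{\mathbb{LB}}$ is handled by Remark \ref{rem: empty phisxay}: it forces $P_{NL}^{*(i)}(a|x^*)=0$ for all $i\in I$, hence $P_{NL}(a|x^*)=0$, so such $a$ never occurs and contributes nothing to the payoff. Summing over all $m\in M$ weighted by $\omega_m$ gives $S_W=\sum_{m\in M}\omega_m=1=S_W^{alg}$.

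The main conceptual hurdle is the passage from the extremal correlations $P_{NL}^{*(i)}$ (for which Lemma \ref{lemma:nlfacet correlation property} and the definitions of $\Phi^{x^*,a}_{\mathbb{LB}}$ and $\Phi^{x^*,a,y}_{\mathbb{LB}}$ are phrased) to an arbitrary facet element $P_{NL}$; the one-line convexity argument above is what bridges this gap and is where the choice of the ``common'' input $x^*$ across all $i\in I$ (secured by the strengthened lemma) is essential. Once this is in place, the rest is essentially the bookkeeping already done in the proof of Theorem \ref{theo: CS0 ns facet}, adapted to use condition (ii) in lieu of the simpler ``avoid $\mathbb{LB}(y)$'' rule.
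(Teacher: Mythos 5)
Your proposal is correct and follows essentially the same route as the paper: the same fixed protocol (Alice queries $x^*$, sends a $y$ guaranteed by condition (ii), Bob forwards his raw outcome), the same use of Remark~\ref{rem: empty phisxay} for the degenerate case, and the same convexity argument bridging extremal points to the full facet. The only cosmetic difference is that the paper proves the payoff equals~$1$ for each extremal $P_{NL}^{*(i)}$ and then invokes linearity of the payoff in $\mathcal{N}$ (hence in the shared correlation), whereas you argue directly that the support of $P_{NL}(\cdot,\cdot|x^*,y)$ is contained in the union of the extremal supports, so every outcome Bob can see already lies in $\Phi^{x^*,a,y}_{\mathbb{LB}}$; these are two equivalent phrasings of the same observation.
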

\begin{proof}
    Please refer to Appendix \ref{app:CS1 nl facet} for the proof.
\end{proof}

We will now explicitly consider some communication tasks corresponding to a no-signalling extremal correlation with dichotomic outputs. 
\subsubsection{Communication task for extremal correlation with dichotomic outputs}
In the simplest non-local scenario with dichotomic inputs and outputs, the polytope $\mathcal{NS}:=\{P=\{P(a,b|x,y)\}_{a\in [2],b\in [2],x\in [2],y\in [2]}\}$ has $8$ non-local extremal correlations or PR-boxes $P_{NL}^{*(\alpha,\beta,\gamma)}=\{P_{NL}^{*(\alpha,\beta,\gamma)}(a,b|x,y)\}_{a\in [2],b\in [2],x\in [2],y\in [2]}$ where:
\begin{align}\label{eq:PR boxes}
    P_{NL}^{*(\alpha,\beta,\gamma)}(a,b|x,y)=\frac{\delta_{(a-1) \oplus (b-1),f(x,y,\alpha,\beta,\gamma)}}{2}.
\end{align}
$f(x,y,\alpha,\beta,\gamma)=(x-1) (y-1) \oplus \alpha (x-1) \oplus \beta (y-1) \oplus \gamma$ and $(\alpha,\beta,\gamma)\in \{0,1\}^3$.\\

As a generic case, consider the no-signalling polytope   $\mathcal{NS}:=\{P=\{P(a,b|x,y)\}_{a\in [2],b\in [2],x\in [X],y\in [d]}\}$ where corresponding to each input there are two outputs in the non-local scenario $($with $Y=d)$. Let $P_{NL}^{*}=\{P_{NL}^{*}(a,b|x,y)\}_{a\in [2],b\in [2],x\in [X],y\in [d]}$ denote some non-local extremal correlation of $\mathcal{NS}$ such that for every $y',y''\in [d]$, there is  $x',x''[X]$ for which $\{P_{NL}^{*}(a,b|x,y)\}_{a\in[2],b\in [2],x\in\{x',x''\},y\in\{y',y''\}}$ is equivalent to a two-input two-output PR-boxes as in equation \eqref{eq:PR boxes}. 

Corresponding to the extremal correlation $P_{NL}^{*}$, we describe the task $\mathbb{CS}[\{P_{NL}^{*}\},d,k=2]$. Alice's input set $M$ consists of relation $\mathrm{R}_{ijkl}:=\{(i,k),(j,l)\}\subset [d]\times [2]$ where $i<j$, $i,j\in [d]$ and $k,l\in [2]$. Bob's output set $N=[2]$. Alice has access to a noiseless channel $\mathscr{T}:[d]\to [d]$. The success metric is as given in \eqref{eq: CS2 payoff} and coefficient $\omega_m=\frac{1}{2d(d-1)}$. Note we dropped all other relations not listed above from the input set; alternately, we assign a payoff of $0$ for them. For this task, we will now discuss the shared randomness bound on the payoff and show that assistance from the extremal correlation $P_{NL}^{*}$ leads to the maximum achievable payoff. 

\begin{cor}\label{corr:CS1 xy22 srbound}
    In $\mathbb{CS}[\{P_{NL}^{*}\},d,k=2]$, $(\mathrm{i})$ the maximum payoff obtained while using arbitrary shared randomness assistance is $s_\Lambda=\frac{3}{4}$ and $(\mathrm{ii})$ the correlation $P_{NL}^{*}$ leads to the payoff $S_{W}^{alg}=1$.
\end{cor}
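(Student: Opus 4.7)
The plan is to handle the two assertions of the corollary separately.

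For part (i), I will specialise Theorem \ref{theo: CS1 classical bound} to the task's input set $\{\mathrm{R}_{ijkl}\}_{i<j,\,k,l\in\{1,2\}}$. For any decoding $\mathbb{D}:[d]\to\{1,2\}$, the relation $\mathrm{R}_{ijkl}=\{(i,k),(j,l)\}$ contributes to the sum if and only if $\mathbb{D}(i)=k$ or $\mathbb{D}(j)=l$. For each of the $\binom{d}{2}=d(d-1)/2$ pairs $(i,j)$ with $i<j$, inclusion--exclusion shows that exactly $2+2-1=3$ of the four $(k,l)\in\{1,2\}^2$ satisfy this disjunction (all except the single pair with $k\neq\mathbb{D}(i)$ and $l\neq\mathbb{D}(j)$). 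Multiplying by the weight $\omega_m=1/(2d(d-1))$ yields $\tfrac{3}{2}d(d-1)\cdot\tfrac{1}{2d(d-1)}=\tfrac{3}{4}$, and since this count does not depend on $\mathbb{D}$, the maximum over decodings is $s_\Lambda=\tfrac{3}{4}$.

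For part (ii), I will construct an explicit wire-reading protocol. Given $m=\{(i,k),(j,l)\}$ with $i<j$, the defining property of $P_{NL}^{*}$ provides $x',x''\in[X]$ such that the restriction of $P_{NL}^{*}$ to $\{x',x''\}\times\{i,j\}$ is a PR box $P(a,b|x,y)=\tfrac{1}{2}\delta_{a\oplus b,\,g(x,y)}$ (identifying $[2]$ with $\mathbb{Z}/2$). The PR-box identity $\bigoplus_{x\in\{x',x''\},\,y\in\{i,j\}}g(x,y)=1$ forces exactly one row $x$ to satisfy $g(x,i)=g(x,j)$ (a \emph{parallel} row) and the other to satisfy $g(x,i)\neq g(x,j)$ (a \emph{crossing} row). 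Alice selects $x=x(m)$ to be the parallel row when $k\neq l$ and the crossing row when $k=l$. On observing $a$, she sends $\tau=i$ whenever $a=k\oplus g(x,i)$ and $\tau=j$ whenever $a=l\oplus g(x,j)$; the two required values $a_i:=k\oplus g(x,i)$ and $a_j:=l\oplus g(x,j)$ differ precisely when $g(x,i)\oplus g(x,j)\neq k\oplus l$, which is exactly how $x(m)$ was chosen. Hence every outcome $a\in\{0,1\}$ matches exactly one of the two options. Bob uses the fixed strategy $y=\tau'$ and $n=b$; then $b=a\oplus g(x,\tau)$ deterministically equals $k$ when $\tau=i$ and $l$ when $\tau=j$, so $(\tau',n)\in m$ always and the payoff attains $S_{W}^{alg}=1$.

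The only non-trivial ingredient is the PR-box parity identity and its immediate corollary that one of the two rows is parallel while the other is crossing; this follows directly from the CHSH characterisation of non-local extremal two-input two-output correlations encoded in \eqref{eq:PR boxes}. A more abstract alternative would be to invoke Theorem \ref{theo: CS1 nl facet} after exhibiting each $\mathrm{R}_{ijkl}$ as an instance of $R_{\mathbb{LB}}$ by setting $\mathbb{LB}(i)$ and $\mathbb{LB}(j)$ to the unique elements of $\{1,2\}\setminus\{k\}$ and $\{1,2\}\setminus\{l\}$ respectively and extending $\mathbb{LB}$ arbitrarily elsewhere; condition (i) in the definition of $\mathcal{R}_{\mathbb{LB}}$ is then automatic, while condition (ii) reduces to the same parity bookkeeping.
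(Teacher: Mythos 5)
Your proof is correct and follows essentially the same route as the paper: part (i) counts, for each pair $(i,j)$, the three good relations $(k,l)$ under any decoding (the paper equivalently identifies the single bad relation per pair and subtracts), and part (ii) constructs the explicit PR-box wiring. Your part (ii) is slightly cleaner because you treat all $P_{NL}^{*(\alpha,\beta,\gamma)}$ uniformly through the parity identity $\bigoplus_{x,y}g(x,y)=1$ and the parallel/crossing-row dichotomy, where the paper instead specialises to $\alpha=\beta=\gamma=0$ and appeals to functional equivalence for the remaining seven boxes; otherwise the arguments match step for step.
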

\begin{proof}
    Please refer to Appendix \ref{app:CS1 classical bound xy22} for the proof of part (i) and Appendix \ref{app:CS1 nl facet bound xy22} for the proof of part (ii).
\end{proof}

\begin{cor}\label{corr:CS1xy22 noisy nl facet}
   If $\frac{1}{2}<p\leq 1$ then assistance from correlation $P=pP_{NL}^{*}+(1-p)P_{WN}$ is advantageous over shared randomness assistance for the task $\mathbb{CS}[\{P_{NL}^{*}\},d,k=2]$.
\end{cor}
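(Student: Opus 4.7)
The plan is to reuse the perfect protocol that achieves $S_W(P_{NL}^{*}) = 1$ in Corollary \ref{corr:CS1 xy22 srbound}(ii) and apply it verbatim to the noisy resource, exploiting linearity of the payoff in the shared correlation. Fix the pre-/post-processings $\Pi$ of Alice and Bob that realise $S_W(P_{NL}^{*}) = 1$. Since the map $P \mapsto \mathcal{N}(\tau',n \mid m)$ defined by \eqref{eq: NSRealizableChannel}--\eqref{eq: NSRealizableChannel2} is linear in the shared correlation $P$, and the payoff \eqref{eq: CS2 payoff} is linear in $\mathcal{N}$, applying $\Pi$ with $P = pP_{NL}^{*} + (1-p)P_{WN}$ yields
\begin{equation}
    S_W(P) = p\, S_W(P_{NL}^{*}) + (1-p)\, S_W(P_{WN}) = p + (1-p)\, S_W(P_{WN}).
\end{equation}
Hence it suffices to evaluate $S_W(P_{WN})$ under the same protocol $\Pi$.

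Next, I would analyse $S_W(P_{WN})$ by noting that under white noise Alice's outcome $a$ and Bob's outcome $b$ are each uniform on $\{1,2\}$ and independent of the queries $x,y$. The protocol $\Pi$ built from the PR-box restriction of $P_{NL}^{*}$ has the property that for each input $m = \mathrm{R}_{ijkl}$ Alice's encoding $\tau = f_\Pi(a,m)$ is a bijection between the two values of $a$ and the pair $\{i,j\}\subset[d]$, and for each received $\tau'\in\{i,j\}$ Bob's decoding $n = g_\Pi(b,\tau')$ is a bijection between the two values of $b$ and $\{1,2\}$. This bijectivity is forced because any non-injective post-processing would, on one outcome of the PR-box, confuse the two required answers and so strictly lower $S_W(P_{NL}^{*})$ below $1$. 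Consequently, under $P_{WN}$ the message $\tau$ is uniform on $\{i,j\}$ while $n$ is uniform on $\{1,2\}$, so the scoring probability for input $m = \mathrm{R}_{ijkl}$ is
\begin{equation}
    \Pr[\tau'=i,\, n=k \mid m] + \Pr[\tau'=j,\, n=l \mid m] = \tfrac{1}{2}\cdot\tfrac{1}{2} + \tfrac{1}{2}\cdot\tfrac{1}{2} = \tfrac{1}{2}.
\end{equation}
Since $\sum_m \omega_m = 1$, this gives $S_W(P_{WN}) = \tfrac{1}{2}$.

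Substituting back, $S_W(P) = p + (1-p)/2 = (1+p)/2$, which strictly exceeds the shared-randomness bound $s_\Lambda = 3/4$ from Corollary \ref{corr:CS1 xy22 srbound}(i) exactly when $p > 1/2$, establishing the advantage. The main obstacle in carrying out this plan is making the bijectivity claim for $f_\Pi$ and $g_\Pi$ rigorous; this is handled by unpacking the PR-box protocol in the proof of Corollary \ref{corr:CS1 xy22 srbound}(ii) and noting that the rigidity of the $(a-1)\oplus(b-1)=(x-1)(y-1)$ relation used there forces both post-processings to be permutations of $\{1,2\}$, up to irrelevant relabellings. Everything else is a linearity argument and a straightforward uniform-marginal computation.
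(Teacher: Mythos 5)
Your proposal is correct and takes essentially the same route as the paper: fix the protocol achieving payoff $1$ with $P_{NL}^{*}$, use linearity to write the payoff under $P = pP_{NL}^{*}+(1-p)P_{WN}$ as $p\cdot 1 + (1-p)\cdot S_W(P_{WN})$, compute $S_W(P_{WN}) = \tfrac{1}{2}$ for that same protocol, and compare with $s_\Lambda = \tfrac{3}{4}$. The only stylistic difference is that you appeal to a ``rigidity forces bijectivity'' argument to justify the uniform marginals under white noise, whereas the paper simply evaluates $\mathcal{N}(\tau',n\mid m)=\tfrac{1}{4}$ for $\tau'\in\{i,j\}$ directly by plugging $P_{WN}$ into the explicit protocol; your rigidity claim is an unnecessary (and unproven, though plausible) detour, since the fixed protocol is bijective by inspection and that is all you need.
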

\begin{proof}
    Please refer to Appendix \ref{app:CS1 noisy nl facet bound xy22} for the proof.
\end{proof}

\subsubsection{Communication task for extremal correlations violating $I3322$ inequality}\label{subsubsec: i3322}
We will now consider a specific instance of the task $\mathbb{CS}[\{P_{NL}^{*(i)}\}_i,d,k]$ discussed above, for a non-local facet of no-signalling polytope $\mathcal{NS}:=\{P=\{P(a,b|x,y)\}_{a\in [2],b\in [2],x\in [3],y\in [3]}\}$.  Consider non-local facet $Face\{P_{NL}^{*(1)}, P_{NL}^{*(2)}\}$. Extremal correlation $P_{NL}^{*(1)}=\{P_{NL}^{*(1)}(a,b|x,y)\}_{a\in [2],b\in [2],x\in [3],y\in [3]}$ , where $P_{NL}^{*(1)}(a,b|x,y)=\frac{\delta_{a,b}}{2}$ if $(x,y)\in\{(1,1),(1,2),(2,1),(2,2),(2,3),(3,2),(3,3)\}$ else $P_{NL}^{*(1)}(a,b|x,y)=\frac{1-\delta_{a,b}}{2}$. For extremal correlations $P_{NL}^{*(2)}=\{P_{NL}^{*(2)}(a,b|x,y)\}_{a\in [2],b\in [2],x\in [3],y\in [3]}$, $P_{NL}^{*(2)}(a,b|x,y)=\frac{\delta_{a,b}}{2}$ if $(x,y)\in\{(1,1),(1,2),(2,1),(2,2),(2,3),(3,2)\}$ else $P_{NL}^{*(2)}(a,b|x,y)=\frac{1-\delta_{a,b}}{2}$.
Correlation on this facet $Face\{P_{NL}^{*(1)}, P_{NL}^{*(2)}\}$ maximally violates the $I3322$ Bell inequality given as follows:

\begin{align}\label{eqn:i3322Bellineq}
\mathbf{B}(P)=&-P_A(2|2)-P_B(2|1)-2P_B(2|2)+\nonumber\\
&P(2,2|1,1)+P(2,2|1,2)+P(2,2|2,1)+\nonumber\\
&P(2,2|2,2)-P(2,2|1,3)+P(2,2|2,3)-\nonumber\\
&P(2,2|3,1)+P(2,2|3,2)
\end{align}

where marginal correlation $P_A(a|x)=\sum_{b\in [B]}P(a,b|x,y)$ for $a\in [A]$ and $x\in [X]$. Similarly, marginal correlation $P_B(b|y)=\sum_{a\in [A]}P(a,b|x,y)$ for $b\in [B]$ and $y\in [Y]$. The local bound for the inequality is $\beta_L=0$, and the no-signalling maximum violation is $1$.\\

Corresponding to the facet $Face\{P_{NL}^{*(1)}, P_{NL}^{*(2)}\}$, we describe the task $\mathbb{CS}[\{P_{NL}^{*(1)},P_{NL}^{*(2)}\},d=3,k=2]$. Alice's input set consists $M$ consists of relation $\mathrm{R}_{ijkl}:=\{(i,k),(j,l)\}\subset [3]\times [2]$ where $i<j$, $i,j\in [3]$ and $k,l\in [2]$. Bob's output set $N=[2]$. Alice has access to noiseless channel $\mathscr{T}:[3]\to [3]$. The success metric is as given in \eqref{eq: CS2 payoff} and coefficient $\omega_m=\frac{1}{12}$. Note we dropped all other relations not listed above from the input set; alternately, we assign a payoff of $0$ for them. For this task, using Corollary \ref{corr:CS1 xy22 srbound}, the shared randomness bound on the payoff is $s_\Lambda=\frac{3}{4}$.  We will now show the advantage of the assistance from some no-signalling correlation as well as shared entanglement. 

\begin{cor}\label{corr:13322 nl facet}
    In $\mathbb{CS}[\{P_{NL}^{*(1)},P_{NL}^{*(2)}\},d=3,k=2]$, assistance from any correlation $P\in Face\{P_{NL}^{*(1)}, P_{NL}^{*(2)}\}$ leads to the payoff $S_{W}^{alg}=1$.
\end{cor}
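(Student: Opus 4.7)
My plan is to derive this corollary as a direct application of Theorem \ref{theo: CS1 nl facet} to the non-local facet $Face\{P_{NL}^{*(1)}, P_{NL}^{*(2)}\}$ of the no-signalling polytope, with $k=2$ and $d=3$. Theorem \ref{theo: CS1 nl facet} already guarantees the algebraic maximum payoff $S_W^{alg}=\sum_m\omega_m=1$ under assistance from any correlation on the specified facet, so the substantive work is just to certify that the task $\mathbb{CS}[\{P_{NL}^{*(1)},P_{NL}^{*(2)}\},d=3,k=2]$ described in this subsection -- with input set $M=\{\mathrm{R}_{ijkl}:i<j,\ i,j\in[3],\ k,l\in[2]\}$ and uniform weights $\omega_m=1/12$ -- is a bona fide instance of the general template $\mathbb{CS}[\{P_{NL}^{*(i)}\}_{i\in I},d,k]$.

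The certification reduces to two checks: (a) every $\mathrm{R}_{ijkl}\in M$ belongs to $\mathcal{R}_{\mathbb{LB}}$ for some $\mathbb{LB}:[3]\to[2]$, and (b) for every $\mathbb{LB}:[3]\to[2]$, some $m\in \mathcal{R}_{\mathbb{LB}}\cap M$ has $\omega_m>0$. For (a), given $\mathrm{R}_{ijkl}=\{(i,k),(j,l)\}$ I would construct $\mathbb{LB}$ by setting $\mathbb{LB}(i)=3-k$ and $\mathbb{LB}(j)=3-l$ -- which immediately secures condition (i) -- and then choose the value $\mathbb{LB}(z)$ on the third input $z\in[3]\setminus\{i,j\}$ so as to satisfy condition (ii). Dually, for (b) I would start from $\mathbb{LB}$ and pick a pair $i<j$ together with the relation $\mathrm{R}_{ij,\,3-\mathbb{LB}(i),\,3-\mathbb{LB}(j)}\in M$, again choosing which pair $(i,j)$ to use so condition (ii) holds.

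Verification of condition (ii) in both (a) and (b) would proceed by invoking Lemma \ref{lemma:nlfacet correlation property} to obtain an input $x^*\in [X]$ compatible with the chosen $\mathbb{LB}$ for both $P_{NL}^{*(1)}$ and $P_{NL}^{*(2)}$, and then computing $\Phi^{x^*,a}_{\mathbb{LB}}$ and $\Phi^{x^*,a,y}_{\mathbb{LB}}$ directly from the zero/nonzero pattern of these two extremal correlations at $x^*$. Because both correlations take values in $\{0,1/2\}$, each $\Phi^{x^*,a,y}_{\mathbb{LB}}$ is at most a singleton $\{3-\mathbb{LB}(y)\}$, and condition (ii) collapses to demanding that for each outcome $a$ with nonvanishing marginal under some $P_{NL}^{*(i)}$ at $x^*$, at least one $y\in\Phi^{x^*,a}_{\mathbb{LB}}$ fulfils $(y,\, 3-\mathbb{LB}(y))\in \mathrm{R}_{ijkl}$. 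Once (a) and (b) are established, Theorem \ref{theo: CS1 nl facet} directly yields $S_W^{alg}=1$ for any $P\in Face\{P_{NL}^{*(1)}, P_{NL}^{*(2)}\}$.

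The principal obstacle is managing the combinatorial case analysis behind condition (ii): with three settings on each side and dichotomic outputs, one must tabulate the zero/nonzero support of $P_{NL}^{*(1)}$ and $P_{NL}^{*(2)}$ at each candidate $x^*\in\{1,2,3\}$ and check that the sparse two-element relations $\mathrm{R}_{ijkl}$ are still rich enough to witness condition (ii) after a judicious extension of $\mathbb{LB}$. The check is bounded but tedious, and choosing $x^*$ correctly for each $\mathbb{LB}$ is the key structural step.
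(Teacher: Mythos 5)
Your plan takes a genuinely different route from the paper. The paper's proof of this corollary does \emph{not} invoke Theorem~\ref{theo: CS1 nl facet}; it simply writes down an explicit protocol — Bob echoes the received message as his query and outputs the correlation's response, while Alice chooses her query $x\in\{1,2,3\}$ and her message depending on whether $R_{ijkl}$ has $i=1,j=2$ or not and whether $k=l$ — and then verifies directly that for every $m=R_{ijkl}$ the pair $(\tau',n)$ lands in $m$ with certainty, for both $P_{NL}^{*(1)}$ and $P_{NL}^{*(2)}$. You instead propose to certify that the given input set is a legitimate instance of the general template and then cite the theorem. That is logically sound as a structure, and your reduction of condition~(ii) to ``for each outcome $a$ with nonvanishing marginal at $x^*$, pick $y\in\Phi^{x^*,a}_{\mathbb{LB}}\cap\{i,j\}$'' is correct given $k=2$ (each $\Phi^{x^*,a,y}_{\mathbb{LB}}$ is at most the singleton $\{3-\mathbb{LB}(y)\}$, and the construction $\mathbb{LB}(i)=3-k$, $\mathbb{LB}(j)=3-l$ makes $(y,3-\mathbb{LB}(y))\in R_{ijkl}$ automatic for $y\in\{i,j\}$).

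Two caveats. First, the verification you defer is \emph{precisely} the content of the paper's explicit protocol: in Theorem~\ref{theo: CS1 nl facet} the choice of $x^*$ and the element $y\in\Phi^{x^*,a}_{\mathbb{LB}}$ that Alice sends is exactly the encoding, so choosing $\mathbb{LB}(z)$ and $x^*$ for each $R_{ijkl}$ against the zero/nonzero pattern of $P_{NL}^{*(1)}$ and $P_{NL}^{*(2)}$ reproduces the paper's case split (e.g.\ for $R_{12,1,1}$ you must take $\mathbb{LB}=(2,2,2)$ and $x^*=3$ — the choice $\mathbb{LB}(3)=1$ fails condition~(ii) at both $x^*=1$ and $x^*=2$). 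So the reduction buys uniformity of statement but not less work, and your proposal as written still leaves that work undone. Second, your check~(b) — that every $\mathbb{LB}:[3]\to[2]$ is witnessed by some $m\in\mathcal{R}_{\mathbb{LB}}\cap M$ with $\omega_m>0$ — is a well-formedness condition used for the \emph{classical} bound (Theorem~\ref{theo: CS1 classical bound}), not for this corollary; including it does no harm but it is not needed to conclude that facet correlations achieve $S_W^{alg}=1$.
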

\begin{proof}
    Please refer to Appendix \ref{app:CS1 nl facet bound I322} for the proof.
\end{proof}
\begin{cor}\label{corr:13322 noisy nl facet}
    For $P\in Face\{P_{NL}^{*(1)}, P_{NL}^{*(2)}\}$, if $\frac{1}{2}<p\leq 1$ then assistance from correlation $P=pP_{NL}+(1-p)P_{WN}$ is advantageous over shared randomness assistance for the task $\mathbb{CS}[\{P_{NL}^{*(1)},P_{NL}^{*(2)}\},d=3,k=2]$.
\end{cor}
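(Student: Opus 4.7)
The plan is to leverage linearity of the payoff functional in the assisting correlation, combined with the result of Corollary \ref{corr:13322 nl facet} (which already shows payoff $1$ is achievable with the non-local part) and a direct computation of what white noise contributes under the same protocol. This mirrors exactly the structure used to establish Corollary \ref{corr:CS1xy22 noisy nl facet} in the generic dichotomic-output setting, now specialized to $d=3$.

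First I would invoke Corollary \ref{corr:13322 nl facet} to fix an explicit protocol (encoding $p_e$, decoding $p_d$, together with any local input coding for the assisting box) that, when used with $P_{NL} \in Face\{P_{NL}^{*(1)}, P_{NL}^{*(2)}\}$, attains $S_W(\mathcal{N}_{P_{NL}}) = 1$. A key structural observation is that to extract payoff $1$, the encoding must place all message weight on coordinates that actually appear in the input relation $m = \{(i,k),(j,l)\}$; that is, for every such $m$ we have $p(\tau' = i \mid m) + p(\tau' = j \mid m) = 1$, since any weight on the third element of $[3]$ is wasted (the payoff coefficient there vanishes).

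Next I would apply the same fixed protocol with the mixed correlation $P = p P_{NL} + (1-p) P_{WN}$. Since \eqref{eq: NSRealizableChannel} is linear in the assisting correlation $P$, the induced distribution $\mathcal{N}_P$ decomposes as $\mathcal{N}_P = p\,\mathcal{N}_{P_{NL}} + (1-p)\,\mathcal{N}_{P_{WN}}$, and the linear payoff \eqref{eq: CS2 payoff} splits accordingly:
\begin{equation}
S_W(\mathcal{N}_P) = p\, S_W(\mathcal{N}_{P_{NL}}) + (1-p)\, S_W(\mathcal{N}_{P_{WN}}).\nonumber
\end{equation}
For the white-noise term, $P_{WN}(a,b|x,y) = 1/4$ has uniform marginals, so Bob's output is uniform over $[2]$ independent of Alice's input, her message, and the encoding. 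Hence $\mathcal{N}_{P_{WN}}(\tau',n|m) = \tfrac{1}{2} p(\tau'|m)$, and using $\omega_m = 1/12$ together with the structural observation above yields
\begin{equation}
S_W(\mathcal{N}_{P_{WN}}) = \tfrac{1}{2}\sum_{m} \omega_m \bigl[p(i|m) + p(j|m)\bigr] = \tfrac{1}{2}.\nonumber
\end{equation}

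Combining gives $S_W(\mathcal{N}_P) = p + (1-p)/2 = (1+p)/2$. Since the shared-randomness bound is $s_\Lambda = 3/4$ by Corollary \ref{corr:CS1 xy22 srbound}(i) (applicable here with $d=3$), the inequality $(1+p)/2 > 3/4$ is equivalent to $p > 1/2$, which is the stated condition. The main subtlety — and the only step that requires any care — is verifying that the specific protocol underlying Corollary \ref{corr:13322 nl facet} does in fact concentrate the transmitted message on the coordinates $\{i,j\}$ appearing in the input relation; once that is checked, the rest is a one-line linearity argument.
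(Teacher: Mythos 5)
Your proposal is correct and takes essentially the same route as the paper: reuse the protocol from Corollary \ref{corr:13322 nl facet} that extracts payoff $1$ from $P_{NL}$, compute that the same protocol yields $1/2$ with $P_{WN}$ (because the encoding always lands in $\{i,j\}$ and $P_{WN}$ makes Bob's output uniform on $[2]$), then invoke linearity to get $(1+p)/2 > 3/4 \iff p > 1/2$ against $s_\Lambda = 3/4$. Your explicit structural check that the encoding concentrates message weight on $\{i,j\}$ is a nice small clarification of a step the paper states directly, but the argument is the same one the paper delegates to Appendix~\ref{app:CS1 noisy nl facet bound xy22}.
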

\begin{proof}
    The proof is similar to the proof of Corollary \ref{corr:CS1xy22 noisy nl facet}
\end{proof}

\subsection{Bounding quantum advantage}
In this section, we demonstrate the advantage of quantum non-locality in the tasks introduced in Sections \ref{subsec: type0CS} and \ref{subsec: type1CS}. To estimate the maximum payoff achievable with entanglement-assisted protocols, we use an alternate semi-definite programming (see-saw) technique to recover lower bounds on the maximal quantum violation of the associated Bell inequality and the advantageous quantum strategy achieving it. For specific values of $d$ and $k$ in the task $\mathbb{CS}[d,k]$, we compute the quantum correlation-assisted payoff using the see-saw algorithm with qubits, observing values that surpass the classical (local) bound (see Table~\ref{tab: CSdk see-saw quantum}). For the task $\mathbb{CS}[\{P_{NL}^{*(i)}\}_{i\in I},d,k]$ presented in Section \ref{subsubsec: i3322}, we found that quantum correlation-assisted payoff using the see-saw algorithm with qubits achieves $0.85355$ which is higher than the local bound $0.75$ on the payoff.\\

We employ the NPA hierarchy~\cite{Navascues2007,Navas_2008} to retrieve upper bounds on the maximum quantum violation for several cases in Table \ref{tab: CSdk npa quantum}. For the task $\mathbb{CS}[d,k]$, in the simplest case of when $d=2,k=2$, the upper bound $0.85355$ from the level $2$ NPA hierarchy match the lower bound from qubits up to machine precision, and hence we recover the maximal quantum violation. Intriguingly, for the two cases $d=2,k=3$ and $d=3,k=2$, the bound for the NPA hierarchy level $1+AB$ deviates from the qubit lower bounds, which we investigate further.\\ 

In particular, we identify a task $\mathbb{CS}[d=2, k=3]$ where a two-qutrit strategy with Alice's measurements being non-projective offers a clear advantage over a two-qubit quantum strategy. To bound the performance achievable with a two-qubit strategy, we employ the Navascués–Vértési method \cite{NVMethod}, which provides an upper bound of $0.93491$, saturating the see-saw lower bound, on the quantum payoff with two-qubit entanglement and rank $1$ projectors. Finally, using the see-saw method, we recover a two-qutrit strategy which yields a higher payoff than qubits and saturates the upper-bound $0.93883$ from level $2$ of the NPA hierarchy, hence forms the maximal quantum payoff. For the $d=3,k=2$ case, although NPA level $2$ recovers a tighter bound $0.95209$ but it still does not match the see-saw lower bounds. Additionally, the lower bounds obtained from the see-saw algorithm did not increase up to local dimension  $8$. For the task $\mathbb{CS}[\{P_{NL}^{*(i)}\}_{i\in I},d,k]$ presented in Section \ref{subsubsec: i3322}, the upper bound $0.85355$ from the level $2$ NPA hierarchy match the lower bound from two-qubit entangled state up to machine precision, and thus we recover the maximal quantum violation. Interestingly, the task $\mathbb{CS}[d=2, k=3]$ enables device-independent certification of the local dimension of the shared entangled state. 

\begin{table}[h!]
    \centering
    \begin{tabular}{|c|c|c|c|}
    \hline
        $(d,k)$ & Quantum & Local  & Advantage\\
         & payoff &  Bound & \\
     \hline    
        $(2,2)$ & $0.85355^{(i)}$ & $0.75$ & $0.10355$\\
     \hline     
        $(3,2)$ & $0.94975$ & $0.875$ & $0.07475$\\
    \hline    
        $(4,2)$ & $0.97487$ & $0.9375$ & $0.03737$\\
     \hline     
        $(5,2)$ & $0.99235$ & $0.96875$ & $0.02360$\\
     \hline     
        $(6,2)$ & $0.99371$ & $0.98437$ & $0.00934$\\
     \hline     
        $(2,3)$ & $0.93491^{(ii)}$ & $0.88888$ & $0.04603$\\
     \hline     
        $(2,4)$ & $0.96338$ & $0.9375$ & $0.02588$\\
     \hline     
        $(2,5)$ & $0.97656$ & $0.96$ & $0.01656$\\
     \hline     
        $(2,6)$ & $0.98373$ & $0.97222$ & $0.01151$\\
     \hline     
        $(3,3)$ & $0.98511$ & $0.96296$ & $0.02215$\\ 
    \hline    
    \end{tabular}
    \caption{Advantage of quantum correlation assistance using {\it two-qubit} entanglement in the task $\mathbb{CS}[d,k]$ obtained using the see-saw algorithm for different values of $d,k$. The payoff from two-qubit entanglement assistance in $(i)$ matches the upper bound on the entanglement-assisted payoff from the level $2$ NPA hierarchy (see Table \ref{tab: CSdk npa quantum}), and that in $(ii)$ matches the upper bound on payoff obtained from the Navascués–Vértési method \cite{NVMethod} for two-qubit entanglement and rank $1$ projectors.}
    \label{tab: CSdk see-saw quantum}
\end{table}
\begin{table}[h!]
    \centering
    \begin{tabular}{|c|c|}
    \hline
        $(d,k)$ & Maximum Quantum Payoff\\
         & (upper bound) \\
        \hline
        $(2,2)$ & $0.85355^{(i)}$ (NPA level $2$) \\
        \hline
        $(3,2)$ & $0.96231$ (NPA $1+AB$)\\
        \hline
        $(3,2)$ & $0.95209$ (NPA level $2$)\\
        \hline
        $(2,3)$ & $0.93883^{(ii)}$ (NPA level $2$) \\
        \hline
    \end{tabular}
    \caption{Maximum advantage of quantum correlation assistance in the task $\mathbb{CS}[d,k]$ obtained using NPA hierarchy for different values of $d,k$. The upper bound in $(i)$ matches the lower bound on payoff obtained using two-qubit entanglement (see Table \ref{tab: CSdk see-saw quantum}). The upper bound in
    $(ii)$ matches the lower bound on the payoff obtained using the see-saw algorithm for two-qutrit entanglement, while two-qubit entanglement strategies yield lower payoff (see Table \ref{tab: CSdk see-saw quantum}).}
    \label{tab: CSdk npa quantum}
\end{table}

\section{Summary and Discussion}
In this work, we explored some fundamental connections between non-locality and its advantages in correlation-assisted classical communication tasks in the PM scenario. We construct a Bell inequality tailored to any bounded classical communication task where assistance from no-signalling correlation is allowed using the wire-cutting technique. We prove that a violation of the associated Bell inequality by a non-local correlation implies an advantage in the communication task, thereby connecting non-locality to efficiency in the communication task.\\

We further discussed wire-reading, a complementary tool that uses the readability of classical messages to demonstrate quantum advantage even in scenarios where classical strategies appear optimal under the standard setting. Building on these tools, we constructed families of classical communication tasks where non-local correlations outperform shared randomness. An advantage in these tasks is closely tied to the geometric structure of the no-signalling polytope. We show that any correlation from a non-local facet of the no-signalling polytope ensures optimal payoff in some communication tasks, and we explicitly analysed a task assisted by extremal non-local correlations with dichotomic outputs. We report several instances of quantum advantage and identify an example where qutrit entanglement outperforms qubit entanglement in achieving maximum payoff.\\

Our work contributes to growing research demonstrating the advantage of shared entanglement and other non-local resources in assisting classical communication in various setups. Early studies revealed that shared entanglement can outperform shared randomness in multi-party communication complexity scenarios \cite{Buhrman001,Cleve1997,PhysRevA.60.2737}. Subsequent works established links between Bell inequality violations and advantages in such tasks \cite{Brukner2002,BRUKNER2003,Brukner2004,Tavakoli2017,Tavakoli2020doesviolationofbell,Ho2022}, highlighting the operational significance of non-local correlations. Quantum advantages have also been demonstrated in tasks inspired by random access codes \cite{RAC1,RAC2,Tavakoli2021, Pauwels2022,Pauwels22}, where entanglement-assisted strategies achieve higher payoff than shared randomness-assisted ones. Beyond these scenarios, entanglement and no-signalling correlations have been shown to increase the zero-error capacity of classical channels and to reduce the size of noiseless classical channels required for simulating noisy ones \cite{Cubitt2010,Cubitt2011,Leung2012}.  In \cite{Yadavalli2022contextualityin}, a related task involving a noisy classical channel was introduced, where the advantage of entanglement over shared randomness was linked to quantum contextuality.  In a PM scenario, \cite{Frenkel22} presented a task where one-bit communication assisted by shared randomness is suboptimal, and any correlation violating the CHSH inequality provides a quantum advantage. Further works \cite{agarwal2025nonlocalityassistedenhancementerrorfreecommunication} showed that quantum and no-signalling correlations can enhance the zero-error capacity of classical channels with zero unassisted zero-error capacity. In \cite{Alimuddin2023}, the authors demonstrated an advantage of Hardy-type non-local correlations, with only dichotomic inputs and outputs, in assisting one-bit communication. Remarkably, an instance of unbounded separation between entanglement-assisted and shared randomness–assisted classical communication was established in \cite{Perry2015}.\\

In contrast to most prior work, except \cite{Cubitt2010,Cubitt2011,Frenkel22,Alimuddin2023,agarwal2025nonlocalityassistedenhancementerrorfreecommunication}, we consider a minimal prepare and measure scenario where Bob does not have input. Using the wire-cutting technique, we construct a Bell inequality for each communication task. More significantly, employing the wire-reading technique, we explicitly construct a communication task tailored to each non-local facet of the no-signalling polytope which were not previously known in the literature.\\

Non-local correlations are known to enhance classical communication, yet it is still unclear whether all non-locality guarantees such an advantage. Our work builds up to the broader question of whether non-locality is both necessary and sufficient for a quantum communication advantage. While wire-cutting (Theorem \ref{theo: PMtoBell}) definitively answers the former, the latter question, whether every non-local correlation can yield an advantage, remains an open question. The wire-reading technique enables the construction of classical communication tasks that are closely tailored to a given non-local resource, providing a versatile and powerful tool for probing the sufficiency of non-locality in such tasks.  Section \ref{subsec: type1CS} presents only a few examples of tasks for a non-local facet of the no-signalling polytope, leaving many similar possibilities unexplored. Modifying and extending our constructions for different facets and analysing their performance when the communication channel could be potentially noisy are promising directions for future investigation.

\section{Acknowledgements}
We thank Mir Alimuddin and Marcin Pawłowski for insightful discussions. PH and SR acknowledge support by the National Science Centre
in Poland under the research grant Maestro (2021/42/A/ST2/00356). AC acknowledges financial support by NCN Grant SONATINA 6 (Contract No. UMO-2022/44/C/ST2/00081). SSB acknowledges funding by the Spanish MICIN (project PID2022-141283NB-I00) with the support of FEDER funds, and by the Ministry for Digital Transformation and of Civil Service of the Spanish Government through the QUANTUM ENIA project call- Quantum Spain project, and by the European Union through
the Recovery, Transformation and Resilience Plan - NextGeneration EU within the framework of the Digital Spain 2026 Agenda.

\bibliographystyle{quantum}
\bibliography{bib}

\onecolumn
\appendix

\section{Bob to Alice signalling correlation lead to invalid conditional probabilities in \eqref{eq: NSRealizableChannel}}\label{appendix: Invalid for signalling}
Say the correlation $P=\{ P(a,b|x,y)\}_{a\in[A],b\in[B],x\in[X],y\in [Y]}$ is signalling from Bob to Alice. We will consider, $P(a,b|x,y)=0$ if $a\notin[A]$ or $b\notin [B]$. As $P$ is Bob to Alice signalling, there exists some $x^*\in [X]$ such that for some $a\in [A]$, there are $y',y''\in [Y]$ for which $P_A(a|x^*,y')\neq P_A(a|x^*,y'')$. Clearly, $\sum_{a\in [A]}P_A(a|x^*,y)=1$ for all $y\in [Y]$. Also we can define a set $\{y_1,y_2,\cdots, y_A\}$ such that for each $a\in [A]$, $y_a\in [Y]$ and $P_A(a|x^*,y_a)=\max_{y\in [Y]}P_A(a|x^*,y) $. Notice that for the signalling correlation $P$, $y_1=y_2=\cdots=y_A=y'''\in[Y]$ is not possible as it will imply $\sum_{a\in [A]}P_A(a|x^*,y)<1$ for some $y\in[Y]$ where $y\neq y'''$. Also, $\sum_{a\in[A]} P_A(a|x^*,y_a)>\sum_{a\in[A]} P_A(a|x^*,y_1)=1$.\\

Now, consider the set $M=[X]$ and $N=[B]$ for instance. And the channel $\mathscr{T}:T\to T$ where $T=[\max\{A,Y\}]$. For $\tau\in [A]$, the conditional probability for channel is $\mathscr{T}(\tau'|\tau)=\delta_{\tau',y_{\tau}}$ and for $\tau\notin [A]$, $\mathscr{T}(\tau'|\tau)=\delta_{\tau',1}$. Also, $\{p(x|m)=\delta_{x,m}\}_{x,m}$ and $\{p(y|\tau')\}_{\tau', y}$ denotes coding respectively for Alice and Bob for choosing their input to $P$ based on their input/ received message. Here $p(y|\tau')=\delta_{y,\tau'}$ for $\tau'\in [Y]$ else $p(y|\tau')=\delta_{y,1}$. $\{p_{e}(\tau|a,m)=\delta_{\tau,a}\}_{\tau, a, m}$ is Alice's encoding for the communication based on her input $m$ and response from correlation $P$. For $\{p_{d}(n|\tau',b)=\delta_{n,b}\}_{\tau', n, b}$ is Bob's decoding based on his received message and response from resource correlation. Now substituting these in equation $3$ for a particular $m=x^*$, we get the following.

\begin{equation} \label{eq: signalling}
 \begin{split}
     \mathcal{N}(n|m=x^*) = \sum_{\substack{a\in [A], b\in[B], x\in[X], y\in [Y],\\ \tau, \tau'\in [A]}}
     \big[ \delta_{x,m} \,\delta_{\tau,a} \, P(a,b|x,y)\delta_{\tau',y_{\tau}} \, \delta_{y,\tau'} \, \delta_{n,b} \big]= \sum_{\substack{a\in [A]}}
     \big[  P(a,n|x^*,y_a) \big]\
 \end{split}
\end{equation}
As $P(a,b|x,y)=0$ for $a\notin [A]$ and therefore the terms vanish for  $\tau\in T\backslash [A]$. Now,
\begin{equation} \label{eq: signalling2}
 \begin{split}
     \sum_{n\in[B]}\mathcal{N}(n|m=x^*) =\sum_{n\in[B]} \sum_{\substack{a\in [A]}}
     \big[  P(a,n|x^*,y_a) \big]= \sum_{\substack{a\in [A]}}
     \big[  P_{A}(a|x^*,y_a) \big]>1
 \end{split}
\end{equation}
Thus, $\mathcal{N}(n|m=x^*)$ is not a valid conditional probability distribution.

\section{Proof of Proposition \ref{prop:wirereading}}\label{appendix:wire-reading}
\begin{proof}
For a task $\mathbb{CW}_{M,N}[\mathscr{T},\{w^m_{\tau',n}\}]$, $M $, $N$ respectively denote the input and output set of Alice and Bob, $\mathscr{T}: T \to T$ is the resource classical channel which is noiseless, {i.e.}, $\mathscr{T}(\tau'|\tau)=\delta_{\tau',\tau}$. The payoff while using wire reading is given as
\begin{equation}\label{eq:app_proofwirereading1}
    S_W(\mathcal{N'})=\sum_{m\in M,n\in N,\tau'\in T} w^m_{\tau',n} ~ \mathcal{N}(\tau',n|m)
\end{equation}
$\{\mathcal{N}(\tau',n|m)\}_{\tau',n,m}$ represents the conditional probability distribution of Bob receiving message $\tau'(=\tau)$ and giving output $n$ when Alice gets input $m$. Using the assistance from some shared no-signalling correlation $P=\{p(a,b|x,y)\}_{a\in [A],b\in [B],x\in [X],y\in [Y]}$, the optimal payoff for the task be denoted as $S_{W}^{opt}$. Since the payoff function is linear in conditional probabilities $\mathcal{N}(\tau',n|m)$, this payoff $S_{W}^{opt}$ can be achieved while using some deterministic strategy as follows. Alice uses the function $\mathbb{FX}:M\to [X]$ to decide the query $x=\mathbb{FX}(m)$ to the correlation $P$ conditioned on the input $m\in M$. Using the function $\mathbb{FT}:M\times [A]\to T$, Alice encodes message $\tau=\mathbb{FT}(m,a)$ based on $m$ and output $a\in [A]$ from the correlation $P$. Similarly, Bob uses the function $\mathbb{FY}:T\to [Y]$ to query $y=\mathbb{FY}(\tau')$ to the correlation $P$ upon receiving message $\tau' \in T$. Using the output $b\in [B]$ based on the query, Bob outputs $n=\mathbb{FN}(\tau',b)$ according to function $\mathbb{FN}:T\times [B]\to N$. The obtained conditional probability distribution $\{\mathcal{N}_{P}~ (\tau',n|m)\}_{\tau',n,m}$, thus obtained, can be written as 

\begin{align} \label{eq:app_proofwirereading2}
\mathcal{N}_P~(\tau' n|m) &= \sum_{\substack{a, b, x, y,\\ \tau}}  \big[ \delta_{x,\mathbb{FX}(m)}~\delta_{\tau,\mathbb{FT}(m,a)}~  P(a,b|x,y)~ \mathscr{T}(\tau'|\tau) \delta_{y,\mathbb{FY}(\tau')}~\delta_{n,\mathbb{FN}(\tau',b)}\big]\nonumber\\
&=\sum_{\substack{a,b, x, y}}  \big[ \delta_{x,\mathbb{FX}(m)}~\delta_{\tau',\mathbb{FT}(m,a)}~  P(a,b|x,y)~  \delta_{y,\mathbb{FY}(\tau')}~\delta_{n,\mathbb{FN}(\tau',b)}\big]
\end{align}

$S_{W}^{opt}=\sum_{m\in M, \tau'\in T,n\in N}w^m_{\tau',n}\mathcal{N}_P~(\tau' n|m)$. Corresponding to the task $\mathbb{CW}_{M, N}[\mathscr{T},\{w^m_{\tau',n}\}]$, consider the following task $\mathbb{CT}_{\tilde{M},\tilde{N}}[\mathscr{T},\{\tilde{w}^m_{n}\}]$ in the PM scenario. $\tilde{M}=M\cup T$ and $\tilde{N}=T\times N$, respectively, denote the input and output set of Alice and Bob. $\mathscr{T}: T\to T$ is the resource noiseless classical channel, {i.e.}, $\mathscr{T}(\tau'|\tau)=\delta_{\tau',\tau}$. The payoff for this task is given as
\begin{align}\label{eq:app_proofwirereading3}
    \Tilde{S}(\Bar{\mathcal{N}})=\sum_{m\in \tilde{M},(t,n_0)\in \tilde{N}} \tilde{w}^m_{t,n_0} ~ \Bar{\mathcal{N}}(t,n_0|m)\nonumber\\
    \text{ where } \tilde{w}^m_{t,n_0}=\begin{cases}
        &w^m_{t,n_0} \quad\quad\quad\quad\quad \text{ for } m\in M\\
        &-\Theta(1-\delta_{t,m}) \quad\text{ for } m\in \Tilde{M}\backslash M
    \end{cases}
\end{align}

Here, $-\Theta$ is a large penalty and can be considered to be $-\infty$\footnote{Similar penalty has been used in \cite{Alimuddin2023}, where Bob must avoid certain outputs conditioned on the input of Alice.}, necessitating zero-error guessing of Alice's input by Bob when $m\in T$. $\Bar{\mathcal{N}}(t,n_0|m)$ denote the conditional probability of Bob giving output $(t,n_0)\in \tilde{N}=T\times N$ when Alice gets input $m\in \tilde{M}$. In this case, consider the assistance from the shared correlation $P$ as before. Say Alice follows the strategy specified by function $\Tilde{\mathbb{FX}}: \tilde{M}\to [X]$ for choosing the query $x\in [X]$ to the correlation $P$. Here, $\Tilde{\mathbb{FX}}(m)=\mathbb{FX}(m)$ for $m\in M$ and $\Tilde{\mathbb{FX}}(m)=1$ otherwise. Alice uses the function $\Tilde{\mathbb{FT}}: \tilde{M}\times [A]\to T$ to send $\tau\in T$ using the channel where $\Tilde{\mathbb{FT}}(m,a)=\mathbb{FT}(m,a)$ for $m\in M$ and $\Tilde{\mathbb{FT}}(m,a)=m$ otherwise. Bob uses $\mathbb{FY}:T\to [Y]$ to decide the query to $P$ based on the received message and $\mathbb{FN}_1:T\times [B]\to T\times N$ where $\mathbb{FN}_1:=id_T\times\mathbb{FN}$, {\it i.e.}, $\mathbb{FN}_1 (\tilde{t},b)=(\tilde{t},\mathbb{FN}(\tilde{t},b))$ to output $(t,n_0)\in \tilde{N}$. In this case, the simulated conditional probability $\Bar{\mathcal{N}_P}(t,n_0|m)$ is

\begin{align}\label{eq:app_proofwirereading4}
\Bar{\mathcal{N}_P}~(t, n_0|m) &= \sum_{\substack{a, b, x, y,\tau,\tau'}}  \big[ \delta_{x,\Tilde{\mathbb{FX}}(m)}~\delta_{\tau,\Tilde{\mathbb{FT}}(m,a)}~  P(a,b|x,y)~ \mathscr{T}(\tau'|\tau) \delta_{y,\mathbb{FY}(\tau')}~\delta_{(t,n_0),\mathbb{FN}_1(\tau',b)}\big]\nonumber\\
&=\sum_{\substack{a, b, x, y, \tau'}}  \big[ \delta_{x,\Tilde{\mathbb{FX}}(m)}~\delta_{\tau',\Tilde{\mathbb{FT}}(m,a)}~  P(a,b|x,y)~ \delta_{y,\mathbb{FY}(\tau')}~\delta_{(t,\tau')}~\delta_{n_0,\mathbb{FN}(\tau',b)}\big]\nonumber\\
&=\sum_{\substack{a, b, x, y}}  \big[ \delta_{x,\Tilde{\mathbb{FX}}(m)}~\delta_{t,\Tilde{\mathbb{FT}}(m,a)}~  P(a,b|x,y)~ \delta_{y,\mathbb{FY}(t)}~\delta_{n_0,\mathbb{FN}(t,b)}\big]
\end{align} 

Clearly,  $\Bar{\mathcal{N}_P}~(t, n_0|m)=\mathcal{N}_P~(t, n_0|m)$ for $m\in M$ and $\sum_{n_0}\Bar{\mathcal{N}_P}~(t= m, n_0|m)=1$ for $m\in T$. $\Tilde{S}(\Bar{\mathcal{N}_P})=\sum_{m\in \tilde{M},(t,n)\in \tilde{N}} \tilde{w}^m_{t,n}~\Bar{\mathcal{N}_P}(t,n|m)=\sum_{m\in M,(t,n)\in \tilde{N}} \tilde{w}^m_{t,n}~\Bar{\mathcal{N}_P}(t,n|m)=S_{W}^{opt}$. Thus, using the aforementioned strategy and the shared correlation $P$ leads to the same payoff $S_W^{opt}$ in the task $\mathbb{CT}_{\tilde{M},\tilde{N}}[\mathscr{T},\{\tilde{w}^m_{n}\}]$. If the optimal payoff for the task $\mathbb{CT}_{\tilde{M},\tilde{N}}[\mathscr{T},\{\tilde{w}^m_{n}\}]$ while using assistance from correlation is denoted by $S^{opt}$, then $S_{W}^{opt}\leq S^{opt}$. Next, we will show that $S^{opt}\leq S_{W}^{opt}$.\\

Since the success metric in \eqref{eq:app_proofwirereading3} is linear in $\Bar{\mathcal{N}}(t,n_0|m)$, the payoff $S^{opt}$ when using assistance from correlation $P$ can be achieved while using some deterministic strategy as following. Alice uses the function $\Bar{\mathbb{FX}}: \tilde{M}\to [X]$ for choosing the query $x\in [X]$ to the correlation $P$. She uses the function $\Bar{\mathbb{FT}}: \tilde{M}\times [A]\to T$ to send $\tau\in T$ using the channel based on input $m$ and output from the correlation $a$. Bob uses the strategy specified by $\Bar{\mathbb{FY}}:T\to [Y]$ to query $y= \Bar{\mathbb{FY}}(\tau')$ to the correlation upon receiving message $\tau'$. $\Bar{\mathbb{FN}}_1:T\times [B]\to T\times N$ is used to output $(t,n_0)=\Bar{\mathbb{FN}}_1(\tau',b)\in \tilde{N}$ upon obtaining $b$ from correlation as output. In this case, the simulated conditional probability $\Bar{\mathcal{N}'_P}(t,n_0|m)$ is

\begin{align}\label{eq:app_proofwirereading5}
\Bar{\mathcal{N}'_P}~(t, n_0|m) &= \sum_{\substack{a, b,x, y, \tau'}}  \big[ \delta_{x,\Tilde{\mathbb{FX}}(m)}~\delta_{\tau',\Bar{\mathbb{FT}}(m,a)}~  P(a,b|x,y)~ \delta_{y,\Bar{\mathbb{FY}}(\tau')}~\delta_{(t,n_0),\Bar{\mathbb{FN}}_1(\tau',b)}\big]
\end{align} 

For the case when Alice's input $m\in T$, Bob's output $t\in T$ must be the same $m$ to avoid the penalty defined in the payoff function \eqref{eq:app_proofwirereading3}. For $\sum_{n_0}\Bar{\mathcal{N}'_P}~(t=m, n_0|m)=1~  \forall~ m\in T$, $\tau=\tau'= \Bar{\mathbb{FT}}(m)=Perm(m)~ \forall~ m\in T$ and $(t,n_0)=\Bar{\mathbb{FN}}_1(\tau',b)=(Perm^{-1}(\tau'),\Bar{\mathbb{FN}}(\tau',b))$ where $Perm:T\to T$ is a permutation and $\Bar{\mathbb{FN}}:T\times [B]\to N$ is some function that is used to decide Bob's other output $n_0$. Substituting this in \eqref{eq:app_proofwirereading5}, for $m\in M$, we get
\begin{align}\label{eq:app_proofwirereading6}
\Bar{\mathcal{N}'_P}~(t, n_0|m) &= \sum_{\substack{a, b, x, y, \tau'}}  \big[ \delta_{x,\Bar{\mathbb{FX}}(m)}~\delta_{\tau',\Bar{\mathbb{FT}}(m,a)}~  P(a,b|x,y)~ \delta_{y,\Bar{\mathbb{FY}}(\tau')}~\delta_{(t,n_0),(Perm^{-1}(\tau'),\Bar{\mathbb{FN}}(\tau',b))}\big]\nonumber\\
&=\sum_{\substack{a, b, x, y, \tau'}}  \big[ \delta_{x,\Bar{\mathbb{FX}}(m)}~\delta_{\tau',\Bar{\mathbb{FT}}(m,a)}~  P(a,b|x,y)~ \delta_{y,\Bar{\mathbb{FY}}(\tau')}~\delta_{t,Perm^{-1}(\tau')}\delta_{n_0,\Bar{\mathbb{FN}}(\tau',b)}\big]\nonumber\\
&=\sum_{\substack{a, b, x, y}}  \big[ \delta_{x,\Bar{\mathbb{FX}}(m)}~\delta_{Perm(t),\Bar{\mathbb{FT}}(m,a)}~  P(a,b|x,y)~ \delta_{y,\Bar{\mathbb{FY}}(Perm(t))}~\delta_{n_0,\Bar{\mathbb{FN}}(Perm(t),b)}\big]\nonumber\\
&=\sum_{\substack{a, b, x, y}}  \big[ \delta_{x,\Bar{\mathbb{FX}}(m)}~\delta_{t,\Bar{\mathbb{FT}}'(m,a)}~  P(a,b|x,y)~ \delta_{y,\Bar{\mathbb{FY}}'(t)}~\delta_{n_0,\Bar{\mathbb{FN}}'(t,b)}\big]
\end{align}
where $\Bar{\mathbb{FY}}'=\Bar{\mathbb{FY}}\circ Perm$ and $\Bar{\mathbb{FN}}'(\tau,b)=\Bar{\mathbb{FN}}(Perm(\tau),b)$. Also, $\Bar{\mathbb{FT}}'=Prem^{-1}\circ \Bar{\mathbb{FT}}$. Note that the strategy leads to the payoff $S^{opt}$.\\

In the task $\mathbb{CW}_{M,N}[\mathscr{T},\{w^m_{\tau',n}\}]$, say Alice uses the strategy specified by function $\mathbb{FX}: M\to [X]$ for choosing the query to the correlation $P$ where $\mathbb{FX}(m)=\Bar{\mathbb{FX}}(m)$. She uses the function $\mathbb{FT}: M\times [A]\to T$ to send $\tau\in T$ using the channel where $\mathbb{FT}(m,a)=\Bar{\mathbb{FT}}'(m,a)$ for $m\in M,a\in [A]$. Bob uses the strategy specified by $\mathbb{FY}:T\to [Y]$ where $\mathbb{FY}(\tau')=\Bar{\mathbb{FY}}(\tau')$ for choosing query to $P$ based on received message $\tau'$ and decoding function $\Bar{\mathbb{FN}}':T\times[B]\to N$. The conditional probability distribution $\{\mathcal{N}'_P~(\tau', n|m)\}_{m,\tau',n}$ they obtain has same expression as in \eqref{eq:app_proofwirereading6} for $m\in M$. As $\tilde{w}^m_{t,n}=w^m_{t,n}$ for $m\in M$, the payoff in this case is also $S^{opt}$. Therefore, for the correlation $P$, the optimal payoff in both tasks is equal, i.e., $S_{W}^{opt}=S^{opt}$.
\end{proof}

\section{Proof of Theorem \ref{theo: CS0 classical}}\label{app:proof CS0 classical}
\begin{proof}
For the task, say Bob uses some deterministic decoding given by function $\mathbb{D}: [d]\to N$ to output $n=\mathbb{D}(\tau')\in N$ when $\tau'$ is the received message. Consider a particular input $m=\{m_1,\cdots m_d\}\in M$ for Alice where $m_i=\mathbb{D}(i)$ for $i\in\{1,\cdots,d\}$. If Alice follows a deterministic encoding $\mathbb{E}: M\to [d]$, she sends $\tau= \mathbb{E}(m)$ for this particular input. Bob's output on receiving message $\tau'=\tau$, is determined as $n=\mathbb{D}(\tau')=m_{\tau'}$. The conditional probability $\mathcal{N}(\tau'=\mathbb{E}(m),n= \mathbb{D}(\tau')|m)=\mathcal{N}(\tau'=\mathbb{E}(m),n=m_{\tau'}|m)=1\implies \mathcal{N}(\tau',n\neq m_{\tau'}|m)=0$ for this input $m$. Thus, the payoff corresponding to this specific input $m$ is $0$ following the encoding $\mathbb{E}$ and decoding $\mathbb{D}$. Note that the deterministic encoding by Alice $\mathbb{E}$ is not specified by Bob's choice of decoding $\mathbb{D}$. Also, for every choice of deterministic decoding $\mathbb{D}: [d]\to N$ by Bob we can find an input of Alice $m\in M$ such that payoff corresponding to this specific input $0$ when Alice uses any arbitrary deterministic encoding $\mathbb{E}: M\to [d]$. Thus, the payoff is upper bounded by  $(S_{W}^{alg}-\frac{1}{k^d})=(1-\frac{1}{k^d})$ for any arbitrary deterministic encoding and decoding strategies by Alice and Bob. Using the linearity of the payoff function in \eqref{eq: CS1 payoff} implies that the maximum payoff can be achieved while using some deterministic strategy. Thus, the shared randomness assisted bound for the task $\mathbb{CS}[d,k]$ is $s_{\Lambda}\leq (1-\frac{1}{k^d})$.\\

    We will show a shared randomness-assisted strategy that achieves this upper bound. Say, Bob uses the deterministic decoding $\mathbb{D}: [d]\to N$ where $\mathbb{D}(\tau')=1 ~\forall \tau'\in [d]$. Alice uses the encoding $\mathbb{E}: M\to [d]$ as follows.  If $m=\{1,1,\cdots,1\}$, $\tau=E(m)=1$ otherwise $\tau=E(m)=\min\{i\in [d]|m_i>1\}$ for input $m=(m_1,\cdots,m_d)\in M$. Bob's output decoding the received message $\tau'=\tau$ is $n=1\ne m_\tau'$ only if $m\neq\{1,1,\cdots,1\}$. The payoff in this case is $\sum_{m\in M, m\neq \{1,1,\cdots,1\}} (\frac{1}{k^d})=(1-\frac{1}{k^d})$.
\end{proof}

\section{Proof of Lemma \ref{lemma:nlfacet correlation property}}\label{app:lemma nlfacet correlation}
\begin{proof}
    For the non-signalling polytope $\mathcal{NS}$, let $\{P_{L}^{(i)}\}$ and $\{P_{NL}^{*(i)}\}$ represent the sets of local and non-local extremal points, respectively. Every local extremal point $P_{L}^{(i)}$ can be described by a pair of functions $\mathbb{LA}:[X]\to [A]$ and $\mathbb{LB}:[Y]\to [B]$. A correlation $P_{NL}=\{P_{NL}(a,b|x,y)\}_{a\in [A],b\in [B],x\in [X],y\in [Y]}$ belongs to a non-local facet of $\mathcal{NS}$ polytope if either $P_{NL}\in\{P_{NL}^{*(i)}\}$ or it can be \emph{only} expressed as some convex mixture of its elements. Now, for some arbitrary function $\mathbb{LB}:[Y]\to [B]$, let us assume that the statement of the Lemma is false. Then, for all $x\in [X]$ there exists at least one $a_x\in [A]$ such that $\Phi^{x,a_x}_{\mathbb{LB}}(P_{NL})= \varnothing$. In other words, $\forall~ y\in [Y],$ $ P_{NL}(a_x,b=\mathbb{LB}(y)|x,y)>0\implies P_{NL}(b=\mathbb{LB}(y)|x,ya_x)>0\land P_{NL}(a_x|x)>0$ using the Bayes rule. Using such outcomes for each $x\in [X]$, we can define a function $\mathbb{LA}:[X]\to [A]$ where $\mathbb{LA}(x)=a_x\in[A]$ such that $P_{NL}(a_x|x)>0$ and $\Phi^{x,a_x}_{\mathbb{LB}}(P_{NL})= \varnothing$.\\

    Now consider the local extremal point $P_{L}=\{P_{L}(a,b|x,y)\}_{a\in [A],b\in [B],x\in [X],y\in [Y]}$ corresponding to $\mathbb{LA}:[X]\to [A]$ and $\mathbb{LB}:[Y]\to [B]$ pair, {\it i.e.}, $P_{L}(a,b|x,y)=\delta_{a,\mathbb{LA}(x)}\delta_{b,\mathbb{LB}(y)}$. Clearly, $\forall x,y~ P_{L}(a,b|x,y)=1 \implies P_{NL}(a,b|x,y)>0$. Consequently we can decompose $P_{NL}$ as following  $P_{NL}=pP_{L}+(1-p)\tilde{P}$ where $0<p\leq 1$ and $\tilde{P}\in \mathcal{NS}$. However, this is a contradiction as $P_{NL}$ is a correlation on the non-local facet of $ \mathcal{NS}$ polytope and does not admit a convex decomposition with a positive weight corresponding to any local extremal correlation $P_{L}\in\{P_{L}^{(i)}\}$.
\end{proof}

\section{Proof of Theorem \ref{theo: CS0 ns facet}}\label{app:CS0 nlfacet correlation}
\begin{proof}
    In this communication task, say the communication channel $\mathscr{T}$ is assisted by some correlation $P_{NL}=\{P_{NL}(a,b|x,y)\}_{a\in [A],b\in [k],x\in [X],y\in [d]}$ from a non-local facet of the no-signalling polytope $\mathcal{NS}:=\{P=\{P(a,b|x,y)\}_{a\in [A],b\in [k],x\in [X],y\in [d]}\}$. Corresponding to each input $m=(m_1,\cdots,m_d)\in M=\{1,\cdots,k\}^d$, we can consider a function $\mathbb{LB}:[d]\to [k]$ such that $\mathbb{LB}(i)=m_i \forall~ i\in [d]$. Using Lemma \ref{lemma:nlfacet correlation property}, for the function $\mathbb{LB}:[d]\to [k]$ corresponding to $m$ and the correlation $P_{NL}$, there exists an $x_m\in [X]$ such that for all $a\in [A]$ there exists a non-empty subset $\Phi^{x_m,a}_{\mathbb{LB}}(P_{NL}) \subseteq [d]$. For all $y\in \Phi^{x_m,a}_{\mathbb{LB}}(P_{NL})$,   $P_{NL}(a,b=\mathbb{LB}(y)|x_m,y)=0$. Based on this, the protocol Alice and Bob use for the input $m$ is as follows:
    \begin{itemize}
        \item {\bf Encoding:} For the input $m$ Alice queries $x_m\in[X]$ to the correlation $P_{NL}$. Obtaining output $a\in [A]$, she sends some particular $y_{(m,a)}\in \Phi^{x_m,a}_{\mathbb{LB}}(P_{NL})\subseteq [d]$ as message $\tau$.
        \item {\bf Decoding:} Bob on receiving message $\tau'=\tau$ uses it as query, {\it i.e.} $y=\tau'=y_{(m,a)}$, to the correlation $P_{NL}$ and gives the outcome $b\in[k]$ as the final output, {\it i.e.}, $n=b$.
    \end{itemize}
Following the above encoding and decoding while sharing correlation $P_{NL}$, 
\begin{align}\label{eq: CS0 nsfacet corr}
\mathcal{N}(\tau',n|m)&=\sum_{\substack{x,a,y,b,\tau}}\substack{[\delta_{x,x_m}\delta_{\tau,y_{(m,a)}} \delta_{\tau,\tau'}\delta_{y,\tau'} P_{NL}(a,b|x,y)\delta_{n,b}]}= \sum_{\substack{a}} \substack{[\delta_{\tau',y_{(m,a)}} P_{NL}(a,n|x_m,y_{(m,a)})]}
\end{align}
For the input $m$, $\mathcal{N}(\tau',n|m)=0$ if $\tau'\notin \{y_{(m,a)}\}_{a\in [A]}$ from the protocol. Also, $\mathcal{N}(\tau',n|m)=0$ if $n=m_{\tau'}$ since $\sum_{\substack{a}}\delta_{\tau',y_{(m,a)}} P_{NL}(a,n|x_m,y_{(m,a)})=0$ in this case from the choice of $y_{(m,a)}$ for each $a\in [A]$. Thus, $\mathcal{N}(\tau',n\neq m_{\tau'}|m)=1$. Thus, the payoff while using the above encoding and decoding is $\sum_{m\in M}(\frac{1}{k^d})=1$.
\end{proof}

\section{Proof of Corollary \ref{corr: CS0 noisy nl facet}}\label{app:CS0 noisy nlfacet correlation}
\begin{proof}
Let Alice and Bob share the correlation  correlation $P_{NL}=\{P_{NL}(a,b|x,y)\}_{a\in [A],b\in [k],x\in [X],y\in [d]}$ on a non-local facet of the no-signalling polytope $\mathcal{NS}:=\{P=\{P(a,b|x,y)\}_{a\in [A],b\in [k],x\in [X],y\in [d]}\}$. Then they can follow the protocol as specified in the proof of Theorem \ref{theo: CS0 ns facet} and obtain a payoff of $S_{W}^{alg}=1$ for the task. Say, they also follow the same protocol while sharing the correlation $P_{WN}=\{P_{WN}(a,b|x,y)=\frac{1}{Ak}\}_{a\in [A],b\in [k],x\in [X],y\in [d]}$. For input $m\in M$, if Alice choses $x_m\in [X]$ as the query to the correlation $P_{WN}$ (as before) and obtaining output $a\in [A]$ sends $\tau=y_{(m,a)}\in \Phi^{x_m,a}_{\mathbb{LB}}(P_{NL})\subseteq[d]$. For the input $m$, let us denote by $r_{\tau,m}$ the multiplicity of $a\in[A]$ such that Alice sends a particular $\tau\in [d]$ as the message. So, $\sum_{\tau\in[d]}r_{\tau,m}=A~\forall ~ m\in M$. Using the correlation $P_{WN}$ and the same protocol as before
\begin{align}\label{eq: CS0 whitenoisecorr}
\mathcal{N}(\tau',n|m)&= \sum_{\substack{a\in[A]}} [\delta_{\tau',y_{(m,a)}}p_{WN}(n|a,x_m,y_{(m,a)})p_{WN}(a|x_m)]=\sum_{\substack{a\in [A
]}} [\delta_{\tau',y_{(m,a)}}\frac{1}{k} \frac{1}{A}]=\frac{1}{kA}r_{\tau',m}
\end{align}
Note that, for $m\in M$ and $\tau'\in [d]$, there is exactly one $n\in [k]$ such that $n=m_{\tau'}$. Thus, $\mathcal{N}(\tau',n\neq m_{\tau'}|m)=\frac{1}{kA}r_{\tau',m}(k-1)$. Thus, the payoff following the above strategy while using assistance from $P_{WN}$ is 
\begin{align}\label{eq: CS0 whitenoisepayoff}
    S_{W}(\mathcal{N})&=\sum_{m\in M}[\sum_{\tau'\in[d]}\frac{1}{k^d}\mathcal{N}(\tau',n\neq m_{\tau'}|m)]\nonumber\\
    &=\sum_{m\in M}[\sum_{\tau'\in[d]}\frac{1}{k^d}\frac{1}{kA}r_{\tau',m}(k-1)]\nonumber\\
    &=\sum_{m\in M}\frac{1}{k^{d+1}A}(k-1)[\sum_{\tau'\in[d]}r_{\tau',m}=\sum_{m\in M}\frac{1}{k^{d+1}}(k-1)=\frac{(k-1)}{k}
\end{align}
Since the payoff function is linear in the conditional probabilities, using the same strategy while sharing the correlation $pP_{NL}+(1-p)P_{WN}$ where $0\leq p\leq1$, the payoff is $p (1)+(1-p)\frac{(k-1)}{k}$. Now, the correlation is advantageous over shared randomness assistance if $p (1)+(1-p)\frac{(k-1)}{k}>1-\frac{1}{k^d}\implies p> (1-\frac{1}{k^{d-1}})$.
\end{proof}

\section{Proof of Theorem \ref{theo: CS1 classical bound}}\label{app:CS1 classical bound}
In the task $\mathbb{CS}[\{P_{NL}^{*(i)}\}_{i\in I},d,k]$ with the payoff function as defined in \eqref{eq: CS2 payoff}, when Alice and Bob have access to shared randomness as assistance to the classical communication then the maximum payoff can always be achieved if Alice and Bob follow some optimal deterministic encoding and decoding scheme as their strategy. Say, Bob uses some deterministic decoding given by function $\mathbb{D}: [d]\to [k]$ to output $n=\mathbb{D}(\tau')\in [k]$ when $\tau'$ is the received message. In this case, consider the inputs of Alice, $m\in M$, which is a relation such that for all $t\in [d],~ (t,\mathbb{D}(t))\notin m$. Let us denote the set of such inputs by $M_{\mathbb{D}}$. Example of such inputs include $m\in\mathcal{R}_{\mathbb{LB}}$ where the function $\mathbb{LB}(t)=\mathbb{D}(t)~ \forall t\in [d]$. Say Alice uses an arbitrary deterministic encoding strategy $\mathbb{E}: M\to [d]$, where she sends the message $\tau= \mathbb{E}(m)$ corresponding to the input $m\in M$. The observed conditional probability $\mathcal{N}(\tau',n|m)=0$ if $m\in M_{\mathbb{D}}$ and $(\tau',n)\in m$, using the definition of set $M_{\mathbb{D}}$. Thus, the payoff using a deterministic encoding strategy $\mathbb{E}$ and decoding strategy $\mathbb{D}$ leads to a payoff $0$ for inputs $m\in M_{\mathbb{D}}$. The payoff using this strategy is  $\leq \sum_{m\in M\backslash M_{\mathbb{D}}} \omega^m$. When Bob uses the decoding $\mathbb{D}$, then the payoff $ \sum_{m\in M\backslash M_{\mathbb{D}}} \omega^m$ can be achieved if Alice uses an optimal deterministic encoding as follows. For each $m\in M\backslash M_{\mathbb{D}}$ she chooses to send $\tau \in [d]$ such that $(\tau'=\tau,\mathbb{D}(\tau'))\in m$. The existence of such an $\tau\in[d]$ for each $m\in M\backslash M_{\mathbb{D}}$ follows directly from the definition of the set $M_{\mathbb{D}}$.\\

Thus, maximum payoff when Bob uses the deterministic decoding strategy $\mathbb{D}: [d]\to N$ is $ \sum_{m\in M\backslash M_{\mathbb{D}}} \omega^m$. Now, optimising over all deterministic decoding by Bob gives the maximum payoff which can be obtained using shared randomness assistance to the classical channel, {\it i.e.}, 

\begin{align}
    s_{\Lambda}=\max_{\mathbb{D}:[d]\to [k]}\left[\sum_{\substack{m\in M\backslash M_{\mathbb{D}}}} \omega_m\right]=\max_{\mathbb{D}:[d]\to [k]}\left[\sum_{\substack{m\in M: \\ \exists \alpha\in[d]\land(\alpha,\mathbb{D}(\alpha))\in m}} \omega_m\right]
\end{align}
Note that for each decoding strategy  $\mathbb{D}:[d]\to [k]$, there exists input $m\in \mathcal{R}_{\mathbb{LB}(=\mathbb{D})}$ such that $\omega_m> 0$ (from the description of the task). As described before, any deterministic encoding choice of Alice leads to a $0$ payoff for the input if $m\in M_{\mathbb{D}}$. Thus, $s_{\Lambda}< S_{W}^{alg}=\sum_{m\in M}\omega_m=1$.

\section{Proof of Theorem \ref{theo: CS1 nl facet}}\label{app:CS1 nl facet}

For $\mathbb{CS}[\{P_{NL}^{*(i)}\}_{i\in I},d,k]$, say Alice and Bob share some correlation $P_{NL}\in Face\{P_{NL}^{*(i)}\}$ on the non-local facet of the no-signalling polytope $\mathcal{NS}:=\{P=\{P(a,b|x,y)\}_{a\in [A],b\in [k],x\in [X],y\in [d]}\}$ and use it for assistance to classical communication. Note that from the task description in Section \ref{subsec: type1CS}, each input of Alice $m\in M$ is some relation which can be described in the following way. Corresponding to the relation $m \subset [d]\times [k]$ there exists a function $\mathbb{LB}:[d]\to [k]$ and  $x^*\in [X]$, independent of the choice of $P_{NL}^{*(i)}$ where $i\in I$ such that: for all $a\in [A]$ if $\cup_{y\in\Phi^{x^*,a}_{\mathbb{LB}}}~\Phi^{x^*,a,y}_{\mathbb{LB}}\neq \varnothing$, then there exists $y\in \Phi^{x^*,a}_{\mathbb{LB}}$, such that $(y,b)\in m$ for all $b\in \Phi^{x^*,a,y}_{\mathbb{LB}}$.\\

Say Alice and Bob use the following encoding and decoding strategy. Alice, receiving the input $m\in M$, identifies the corresponding function $\mathbb{LB}:[d]\to [k]$ and uses the $x^*\in [X]$ as the query to the shared correlation $P_{NL}$. Upon obtaining the outcome $a\in [A]$ if $\cup_{y\in\Phi^{x^*,a}_{\mathbb{LB}}}~\Phi^{x^*,a,y}_{\mathbb{LB}}\neq \varnothing$, she sends the message $\tau=y\in \Phi^{x^*,a}_{\mathbb{LB}}$ such that $(y,b)\in m$ for all $b\in \Phi^{x^*,a,y}_{\mathbb{LB}}$. The existence of such $y$ is guaranteed by condition (ii) that the relation $m$ satisfies. For outcomes $a\in [A]$ such that $\cup_{y\in\Phi^{x^*,a}_{\mathbb{LB}}}~\Phi^{x^*,a,y}_{\mathbb{LB}}=\varnothing$, she send message $\tau=1$. Bob, on receiving message $\tau'=\tau$, uses it as his query to the correlation $P_{NL}$ and on obtaining outcome $b\in [k]$ gives it as his output, {\it i.e.}, $n=b$.\\

If $P_{NL}=P_{NL}^{*(i)}$ where $i\in I$ . For the input $m\in M$, on using $x^*$ as query,  if $P_{NL}^{*(i)}(a|x^*)>0$ for some $a\in [A]$ then $\cup_{y\in\Phi^{x^*,a}_{\mathbb{LB}}}~\Phi^{x^*,a,y}_{\mathbb{LB}}\neq \varnothing$. Corresponding to each such output $a\in [A]$, Bob's query upon receiving Alice's message $\tau'=\tau\in \Phi^{x^*,a}_{\mathbb{LB}}$ implies that the outcome $b'\in [k]$, {\it i.e.}, $P_{NL}^{*(i)}(b'|x^*,\tau',a)>0$, also belongs to $\Phi^{x^*,a,\tau'}_{\mathbb{LB}}$. From the protocol and using the fact that $m$ satisfies condition (ii), it implies that always $(\tau',n=b')\in m$ for the input $m\in M$ leading to a payoff $\omega_m$. Thus, the strategy described for all $m\in M$ while sharing correlation $P_{NL}^{*(i)}$ guarantees payoff $\sum_{m\in M}\omega_m=1=S_{W}^{alg}$.\\

As the strategy is invariant over shared correlation $P_{NL}=P_{NL}^{*(i)}$, where $i\in I$, and leads to payoff $S_{W}^{alg}$, therefore it will also give a payoff $S_{W}^{alg}$ when $P_{NL}\in Face\{P_{NL}^{*(i)}\}$ using the linearity of the payoff in the conditional probability distribution $\{ \mathcal{N}(\tau',n|m)\}_{\tau',n,m}$.

\section{Proof of part (i) of Corollary \ref{corr:CS1 xy22 srbound}}\label{app:CS1 classical bound xy22}
In task $\mathbb{CS}[\{P_{NL}^{*}\},d,k=2]$, when Alice and Bob have access to shared randomness as assistance to the classical communication then the maximum payoff can always be achieved if Alice and Bob follow some optimal deterministic encoding and decoding scheme as their strategy. Say, Bob uses some deterministic decoding given by function $\mathbb{D}: [d]\to [2]$ to output $n=\mathbb{D}(\tau')\in [2]$ when $\tau'$ is the received message. Consider Alice's input $m\in M$ where $m=\mathrm{R}_{ijkl}$ such that $k \neq\mathbb{D}(i)$ and $l \neq\mathbb{D}(j)$. Let us denote the set of such inputs by $M_{\mathbb{D}}$. Since one such input is possible for each choice of $i,j$, thus $|M_{\mathbb{D}}|={d\choose 2}$. If Alice uses an arbitrary deterministic encoding strategy $\mathbb{E}: M\to [d]$, where she sends the message $\tau= \mathbb{E}(m)$ corresponding to the input $m\in M$. The observed conditional probability $\mathcal{N}(\tau',n|m)=0$ if $m\in M_{\mathbb{D}}$ and $(\tau',n)\in m$, using the definition of set $M_{\mathbb{D}}$. Thus, the payoff using a deterministic encoding strategy $\mathbb{E}$ and decoding strategy $\mathbb{D}$ leads to a payoff $0$ for inputs $m\in M_{\mathbb{D}}$. The payoff using this strategy is  $\leq \sum_{m\in M\backslash M_{\mathbb{D}}} \omega^m=[4{d\choose 2}-{d\choose 2}]\frac{1}{2d(d-1)}=\frac{3}{4}$.  \\

This upper bound is independent of the choice of the deterministic decoding strategy $\mathbb{D}$ and can be achieved by using Alice's optimal encoding. For input $m\in M\backslash M_{\mathbb{D}}$, $m=\mathrm{R}_{ijkl}$ such that either $k =\mathbb{D}(i)$ or $l=\mathbb{D}(j)$. Alice sends the message $\tau= \mathbb{E}(m)=i$ if $k =\mathbb{D}(i)$ else she sends $\tau= \mathbb{E}(m)=j$ corresponding to the input $m\in M\backslash M_{\mathbb{D}}$. Bob's decoding $\mathbb{D}$ guarantees a payoff $\omega_m=\frac{1}{2d(d-1)}$ for the input $m\in M\backslash M_{\mathbb{D}}$.

\section{Proof of part (ii) of  Corollary \ref{corr:CS1 xy22 srbound}}\label{app:CS1 nl facet bound xy22}
In $\mathbb{CS}[\{P_{NL}^{*}\},d,k=2]$, say Alice and Bob use shared correlation $P_{NL}^{*}$ to assist the communication channel. We will present a protocol for the parties and show that using it gives a payoff $S_{W}^{alg}=1$. Bob, on receiving message $\tau'=\tau$, uses it as his query to the correlation $P_{NL}^{*}$ and upon obtaining outcome $b\in [2]$, gives it as his output, {\it i.e.}, $n=b$.\\

Alice's input $m\in M$ corresponds to some relation $\mathrm{R}_{ijkl}:=\{(i,k),(j,l)\}\subset [d]\times [2]$ where $i<j$, $i,j\in [d]$ and $k,l\in [2]$.
Now we will provide an encoding strategy for Alice.  First consider inputs $m=\mathrm{R}_{ijkl}$ where $i,j$ is fixed. Corresponding to $y'=i,y''=j$, from the description of the extremal correlation $P_{NL}^{*}$, there is some $x',x''(>x')\in [X]$ for which $\{P_{NL}^{*}(a,b|x,y)\}_{a\in[2],b\in [2],x\in\{x',x''\},y\in\{y',y''\}}$ is equivalent to a two-input two-output PR-box as in equation \eqref{eq:PR boxes}. For simplicity, assume it is equivalent to $\{P_{NL}^{*}(a,b|x,y)\}_{a\in[2],b\in [2],x\in\{x',x''\},y\in\{y',y''\}}=P_{NL}^{*(0,0,0)}$, {\it i.e.}, $\alpha=0,\beta=0,\gamma=0$. Similar encodings for the other assignments of $\alpha,\beta,\gamma$ can be obtained, all of which are functionally equivalent to the one we present. For input relation $m= \mathrm{R}_{ijkl}$, if $k=l$,
Alice uses $x''\in [X]$ as her query to the correlation $P_{NL}^{*}$, and if $k\neq l$, Alice uses $x'\in [X]$ as her query. If $k=1$, depending on her outcome $a\in [2]$, she sends the message $\tau=i$ if $a=1$ and $\tau=j$ if $a=2$. If $k=2$, she sends the message $\tau=j$ if $a=1$ and $\tau=i$ if $a=2$.  For each input $m= \mathrm{R}_{ijkl}$, where $i,j$ is fixed, using this strategy along with Bob's decoding ensures that the communicated message $\tau'$ and Bob's output $n$ are such that $(\tau',n)\in m$. Consequently, it results in a payoff $\omega_m$ for each input $m= \mathrm{R}_{ijkl}$ where $i,j$ is fixed.
Notice that $i,j\in [d]$ here are arbitrary (but fixed) indices. Thus, Alice can use a similar encoding for each pair of $i,j\in [d]$. The payoff using such encodings by Alice and the aforementioned decoding by Bob is $\omega_m$ for each $m\in M$. The overall payoff is $\sum_{m\in M}\omega_m=1$.

\section{Proof of Corollary \ref{corr:CS1xy22 noisy nl facet}}\label{app:CS1 noisy nl facet bound xy22}
In $\mathbb{CS}[\{P_{NL}^{*}\},d,k=2]$, say Alice and Bob use shared correlation $P_{WN}=\{P_{WN}(a,b|x,y)=\frac{1}{4}\}_{a\in[2],b[2],x\in [X],y\in [d]}$ to assist the communication channel. Say Alice and Bob use the same encoding and decoding as discussed in the proof of part (ii) of Corollary \ref{corr:CS1 xy22 srbound} (see appendix \ref{app:CS1 nl facet bound xy22}). Clearly for each input $m= \mathrm{R}_{ijkl}$, $\mathcal{N}(\tau',n|m)=\frac{1}{4}$ if $\tau'\in\{i,j\}$ and $n\in [2]$ using the strategy. Thus, for each $m$ as described above the payoff is $\omega_m(\mathcal{N}(i,k|m)+\mathcal{N}(j,l|m))=\frac{\omega_m}{2}$. Thus, the net payoff while using this strategy is $\sum_{m\in M}\frac{\omega_m}{2}=\frac{1}{2}$.\\

Given that the shared correlation is $ P = p\,P_{NL}^{*} + (1 - p)\,P_{WN} $, and the parties employ the strategy outlined in the proof of part (ii) Corollary \ref{corr:CS1 xy22 srbound}, the resulting total payoff is the convex sum of the individual payoffs corresponding to \( P_{NL}^{*} \) and \( P_{WN} \), weighted by \( p \) and \( 1 - p \), respectively. Thus, the payoff is $p(1)+(1-p)\frac{1}{2}$. This payoff is higher than $s_{\Lambda}=\frac{3}{4}$ if $p+\frac{1-p}{2}>\frac{3}{4}\implies p>\frac{1}{2}$.

\section{Proof of Corollary \ref{corr:13322 nl facet}}\label{app:CS1 nl facet bound I322}
In $\mathbb{CS}[\{P_{NL}^{*(1)},P_{NL}^{*(2)}\},d=3,k=2]$, say the parties use correlation $P\in Face\{P_{NL}^{*(1)},P_{NL}^{*(2)}\}$ as assistance to the communication channel. The following protocol for the parties gives a payoff $S_{W}^{alg}=1$. Upon receiving message $\tau'=\tau$, Bob uses it as his query to the correlation $P$ and after obtaining outcome $b\in [2]$, gives it as his output, {\it i.e.}, $n=b$.\\

For inputs $m\in M$ such that $m= \mathrm{R}_{12kl}$, if $k=l$, Alice uses $x=3$ as her query to the correlation $P$, and if $k\neq l$, Alice uses $x=1$ as her query. If $k=1$, depending on her outcome $a\in [2]$, she sends the message $\tau=2$ if $a=1$ and $\tau=1$ if $a=2$.  If $k=2$, depending on her outcome $a\in [2]$, she sends the message $\tau=a$. For other inputs $m\in M$ such that $m= \mathrm{R}_{ijkl}$, if $k=l$, Alice uses $x=1$ as her query to the correlation $P$, and if $k\neq l$, Alice uses $x=2$ as her query. For inputs these inputs $m= \mathrm{R}_{ijkl}$ if $k=1$, depending on her outcome $a\in [2]$, she sends the message $\tau=i$ if $a=1$ and $\tau=j$ if $a=2$.  If $k=2$, depending on her outcome $a\in [2]$, she sends the message $\tau=j$ if $a=1$ and $\tau=i$ if $a=2$. For each input $m= \mathrm{R}_{ijkl}$, using the above protocol ensures that the communicated message $\tau'$ and Bob's output $n$ are such that $(\tau',n)\in m$. Consequently, it results in a payoff $\omega_m$ for each input $m$. The overall payoff is therefore $\sum_{m\in M}\omega_m=1$.
\end{document}